\documentclass[11pt]{article}
\usepackage[margin=1in]{geometry}
\usepackage{amsmath,amssymb,amsthm}
\usepackage{hyperref}
\usepackage{booktabs}
\usepackage{graphicx}
\usepackage{array}
\usepackage[expansion=false]{microtype}

\newcolumntype{L}[1]{>{\raggedright\arraybackslash}p{#1}}
\newcolumntype{C}[1]{>{\centering\arraybackslash}p{#1}}
\newtheorem{theorem}{Theorem}
\newtheorem{corollary}{Corollary}

\title{Dynamic Quantum Wormholes: Path-Integral State Selection\\
and the Quantum Throat Trajectory\\
in an Exactly Solvable Thin-Shell Model}

\author{
\begin{tabular}{c}
Rikpratik Sengupta$^{1}$ and Paulo Vargas Moniz$^{2}$\\[4pt]
\small $^{1}$Department of Physics, Indian Institute of Technology Kanpur, Kalyanpur 208016, India.\\
\small $^{2}$Departamento de F\'{\i}sica, Centro de Matem\'atica e Aplica\c{c}\~oes (CMA-UBI), Universidade da Beira Interior,\\
\small Rua Marqu\^{e}s d'\'Avila e Bolama, 6200-001 Covilh\~a, Portugal
\end{tabular}
}
\date{}

\begin{document}
\maketitle

\begin{abstract}
We construct a minisuperspace quantization of a dynamic, spherically symmetric thin-shell wormhole in the spirit of pilot-wave quantum cosmology, treating the throat radius $a(\tau)$ as the single reduced degree of freedom of the Israel junction-condition constraint. For a pressureless shell of mass $M$ the Wheeler--DeWitt equation reduces exactly to Whittaker's equation, and in the limit $M\to0$ to the modified Bessel equation. Any real solution has a vanishing Bohmian current. More substantively, when the exterior is asymptotically flat, boundedness of the wave function as $a\to\infty$ selects a unique state that can always be chosen real, so that the throat is frozen for any linear equation of state and any operator ordering, independently of the choice between Euclidean and Lorentzian boundary conditions. We show that this state is the exact $\epsilon\to0^+$ limit of the regularized Lorentzian path integral, using resolvent analyticity together with an exact Picard--Lefschetz thimble calculation that agrees with the closed-form solution to $48$ digits, and we resolve the near-throat self-adjointness ambiguity in closed form through the Whittaker connection formula. Replacing the exterior by Schwarzschild--de~Sitter removes the obstruction. The potential then tends to $-\infty$ at large $a$, a continuum appears at the constraint surface, and the Wheeler--DeWitt equation admits a genuinely dynamic, purely outgoing state. Propagating this state exactly through the classically forbidden barrier yields a numerically conserved Bohmian current that is nonzero everywhere, including in the inner pocket of the throat. The state describes a long-lived metastable wormhole that tunnels into unbounded expansion at a rate consistent with the independent WKB estimate. Airy connection formulas regularize the tunneling prefactor and leave a residual constant $\kappa_\infty=0.814\pm0.004$ for sub-critical shell coupling, established over a twelve-point parameter scan. We exclude three specific analytic origins for this constant, including the leading Fr\"oman--Fr\"oman phase-integral correction, and we note that the physical results do not depend on it. The full construction is repeated at a second, non-marginal parameter point.
\end{abstract}

\section{Introduction}
\label{sec:intro}

Traversable wormholes, which act as shortcuts between otherwise distant regions of spacetime, require exotic matter that violates the energy conditions in order to hold the throat open \cite{MorrisThorne1988}. The classic constructions are static or quasi-static, with the throat radius held fixed or treated as a slowly evolving background. The additional physics of a genuinely time-dependent throat, whose radius accelerates, decelerates or tunnels between regimes, has been explored much less, even though it is the generic expectation once the supporting matter is itself dynamical. A minimal and well-controlled setting for this physics is the thin-shell wormhole, in which two copies of a vacuum exterior geometry are glued along a shell of exotic matter whose motion is fixed entirely by the Israel junction condition once an equation of state is specified \cite{Visser1989,PoissonVisser1995}. The construction reduces the wormhole dynamics to a single mechanical degree of freedom, the throat radius $a(\tau)$, which obeys a Friedmann-like constraint. This is structurally the same type of reduced system that canonical Wheeler--DeWitt quantum cosmology quantizes for the scale factor of a homogeneous universe \cite{PintoNetoFabris2013}. The analogy is the starting point of the present work. Classically the throat is unconditionally unstable to radial perturbations \cite{PoissonVisser1995}, and we ask what canonical quantization of this single degree of freedom implies for the throat dynamics, and in particular whether a well-defined notion of a genuinely dynamic quantum throat exists.

Any such quantization inherits the interpretational difficulties of quantum cosmology. A Wheeler--DeWitt wave function does not evolve in an external time and is not obviously associated with a unique observable trajectory. Extracting dynamics from it requires either a probabilistic interpretation of the Copenhagen type, which is awkward for a single quantized universe or throat without an external observer, or a hidden-variable completion that supplies a trajectory directly. We adopt the latter and use the de~Broglie--Bohm (dBB) pilot-wave formulation, in which a deterministic guidance equation built from the phase of $\Psi$ gives a well-defined trajectory $a(\tau)$ for any state, without measurement, collapse or an external observer \cite{PintoNetoFabris2013}. This choice is common in the quantum cosmology literature, largely because it is one of the few interpretations that produces an actual trajectory from a single wave function of the universe or, as here, of the throat.

Even within the dBB framework a quantized minisuperspace model is not yet uniquely defined. The Wheeler--DeWitt equation is a constraint rather than a Schr\"odinger equation with a preferred initial condition, and a boundary or selection condition must be imposed to single out a physical state. Two structurally different prescriptions are influential in the cosmological setting. The first is the Euclidean, compact, no-boundary construction of Hartle and Hawking \cite{HartleHawking1983}, which continues analytically to Euclidean time and demands a smooth real saddle geometry with no initial boundary. The second is the Lorentzian path integral regularized by Picard--Lefschetz theory, advocated more recently by Feldbrugge, Lehners and Turok \cite{FLT2017,FLT2018}, which works directly with the Lorentzian action and selects saddle points by steepest-descent contour deformation. It is closely related to Vilenkin's earlier tunneling proposal of a purely outgoing wave function at large scale factor \cite{Vilenkin1986}. The two programs disagree, sometimes sharply, about which saddle geometries dominate a given minisuperspace path integral, and the resulting debate remains open. We stress that this boundary-condition question is logically distinct from the existence of Euclidean wormhole solutions such as the Giddings--Strominger axion wormhole \cite{GiddingsStrominger1988}. Such solutions are classical saddle-point geometries of a Euclidean gravitational action, studied for their own sake, for example in connection with their stability under fluctuations and their role in factorization puzzles \cite{LogesShiuSudhir2022}, and independently of any pilot-wave interpretation. Our results concern the de~Broglie--Bohm trajectory extracted from a quantized minisuperspace wave function once a boundary-condition prescription has been applied to the single reduced variable $a$. They say nothing about, and are not in tension with, the existence or stability of higher-dimensional Euclidean wormhole geometries.

The contribution of this paper is to show that, for the thin-shell throat model, the entire Euclidean versus Lorentzian question is controlled by one simple and physically transparent piece of data, the large-throat asymptotic behavior of the effective potential $U(a)$ in the reduced Wheeler--DeWitt equation. When the exterior is asymptotically flat, $U(a)$ tends to a positive constant at large $a$, there is no classically allowed region at large throat radius, and two structural theorems show that the requirement of boundedness there forces the physical state to be real and hence dynamically frozen, regardless of which boundary-condition philosophy one favors. When the exterior is instead Schwarzschild--de~Sitter, the cosmological constant drives $U(a)\to-\infty$, a classically allowed oscillatory region reopens, and the same reduced equation admits a genuinely dynamic, current-carrying, tunneling state. The model class is deliberately simple. It has a single mechanical degree of freedom and is exactly solvable for the flat exterior and numerically exact for the de~Sitter case, so that the asymptotic-potential mechanism can be established, and verified in several independent ways, with a level of rigor not usually available in minisuperspace quantum cosmology.

The paper proceeds as follows. We derive the classical throat constraint from the Israel junction condition for a symmetric two-sided Schwarzschild thin shell with a linear equation of state and quantize it canonically into a Wheeler--DeWitt equation (Section~\ref{sec:model}). We solve this equation exactly for the pressureless branch in terms of Whittaker functions and, independently, for its $M\to0$ limit in terms of modified Bessel functions (Section~\ref{sec:flatsolutions}). Section~\ref{sec:flatmodel} proves Theorem~\ref{thm:real}, which states that a real state carries no Bohmian current, and Theorem~\ref{thm:bounded}, the central result of the flat case, which states that boundedness at $a\to\infty$ selects a unique and therefore frozen state for any equation-of-state parameter $w>-1/2$ and any factor ordering. The near-origin ambiguity for strong shell coupling is resolved down to a closed-form self-adjoint extension angle, and the resolvent and an exact Picard--Lefschetz thimble calculation show that the same state is selected by the Lorentzian path integral. Section~\ref{sec:desitter} then changes the exterior to Schwarzschild--de~Sitter and constructs, verifies and analyzes the resulting dynamic state, including its saddle-point structure, its Airy-regularized tunneling prefactor, its Bohmian trajectories and its dependence on parameters. Section~\ref{sec:discussion} discusses the physical interpretation and the relation to earlier work.

\section{Classical dynamics and canonical quantization}
\label{sec:model}

Consider a symmetric thin-shell wormhole built from two identical copies of the Schwarzschild exterior of mass $M$, glued at a throat of areal radius $a(\tau)$, where $\tau$ is the proper time of the shell \cite{Visser1989}. The Lanczos--Israel junction condition for the symmetric case gives the surface energy density of the shell in terms of $a$ and $\dot a$,
\begin{equation}
\sigma(a) = -\frac{1}{2\pi a}\sqrt{f(a) + \dot a^2}, \qquad f(a) = 1 - \frac{2M}{a}.
\label{eq:sigma}
\end{equation}
The overall minus sign in Eq.~\eqref{eq:sigma} is not a choice. The shell is automatically composed of negative-energy matter, consistent with the general necessity of energy-condition violation for any traversable wormhole \cite{MorrisThorne1988}. Squaring and rearranging gives the equation of motion of the throat in Friedmann-like form,
\begin{equation}
\dot a^2 = -U(a), \qquad U(a) \equiv f(a) - \big(2\pi a\,\sigma(a)\big)^2 .
\label{eq:constraint}
\end{equation}
For a linear equation of state on the shell, $p = w\sigma$, conservation of the shell stress-energy, $\dot\sigma + 2(\sigma+p)\dot a/a = 0$, integrates to $\sigma(a) = \sigma_0 (a_0/a)^{2(1+w)}$, which gives
\begin{equation}
U(a) = 1 - \frac{2M}{a} - \frac{C_w}{a^{2+4w}}, \qquad C_w \equiv 4\pi^2 \sigma_0^2 a_0^{4(1+w)} \;(>0),
\label{eq:Ugen}
\end{equation}
valid for any $w>-1/2$. This restriction ensures that the shell term is subdominant to the constant term as $a\to\infty$. The pressureless (dust) case $w=0$ gives $U(a)=1-2M/a-C/a^2$ with $C\equiv C_0$, which is the exactly solvable branch treated in Section~\ref{sec:flatsolutions}.

For $w=0$ the classical turning point $U(a_+)=0$ lies at $a_+ = M+\sqrt{M^2+C}$, and $U(a)<0$, which is classically allowed motion, only for $0<a\le a_+$. Since $U(a)\to-\infty$ as $a\to0^+$, the classical throat accelerates without bound toward the pinch $a=0$. This is the classical instability of thin-shell wormholes already known from linear stability analysis \cite{PoissonVisser1995}, and it is the singularity that our quantum treatment is designed to probe. For later use we note that $U(a)\to1$ as $a\to\infty$ for every $w>-1/2$, because the shell contribution and the mass term both fall off and leave the flat-space unit term of $f(a)$.

Equation~\eqref{eq:constraint} is a zero-energy Hamiltonian constraint, $H = p_a^2 + U(a) = 0$, in units where the reduced mass is $1/2$ and $\hbar=1$. Canonical quantization with $p_a \to -i\, d/da$ and the simplest unit-weight factor ordering gives
\begin{equation}
\left[-\frac{d^2}{da^2} + U(a)\right] \Psi(a) = 0.
\label{eq:WDW}
\end{equation}
A general ordering replaces $-d^2/da^2$ by $-a^{-p}\frac{d}{da}\big(a^p \frac{d}{da}\big)$ with ordering parameter $p$. This adds a term proportional to $p\,\Psi'/a$ to Eq.~\eqref{eq:WDW}, which decays at large $a$ at the same rate as the $M/a$ term already present. As shown in Section~\ref{sec:flatmodel}, the boundary-value argument behind Theorem~\ref{thm:bounded} depends on the equation only through its two independent exponential rates at $a\to\infty$, which the ordering term does not change. The theorem is therefore independent of ordering, and we set $p=0$ throughout for the explicit closed-form solutions.

Writing $\Psi = R\,e^{iS}$, the real and imaginary parts of Eq.~\eqref{eq:WDW} give the quantum Hamilton--Jacobi equation and the continuity equation,
\begin{equation}
(S')^2 = -U(a) - Q(a), \qquad Q(a) \equiv -\frac{R''}{R}, \qquad \frac{d}{da}\!\left(R^2 S'\right) = 0,
\label{eq:HJ}
\end{equation}
and the dBB guidance equation is $\dot a = S'(a)$, with Bohmian current $J\equiv R^2S'=\mathrm{Im}(\Psi^*\Psi')$.

\section{Exact solutions for the flat exterior}
\label{sec:flatsolutions}

For the dust branch of Eq.~\eqref{eq:Ugen}, the substitution $z = 2a$ in Eq.~\eqref{eq:WDW} gives Whittaker's equation with $\kappa = M$ and $\mu = \sqrt{1/4-C}$. Its solutions are
\begin{equation}
\Psi_{\rm reg}(a) = M_{\kappa,\mu}(2a) \quad (\text{regular as } a\to0), \qquad
\Psi_{\rm dec}(a) = W_{\kappa,\mu}(2a) \quad (\text{decaying as } a\to\infty).
\label{eq:whittaker}
\end{equation}
Both are exact closed-form solutions of Eq.~\eqref{eq:WDW}. We checked this by direct substitution using arbitrary-precision arithmetic with 30 digits, for the illustrative parameters $M=1$, $C=0.1$ ($\mu\approx0.3873$), and we also confirmed the turning-point formula $U(a_+)=0$ at $a_+=M+\sqrt{M^2+C}$. The results are collected in Table~\ref{tab:verify}.

\begin{table}[h]
\centering
\small
\begin{tabular}{@{}L{6.3cm}L{5.0cm}C{1.6cm}@{}}
\toprule
Check & Value & Expected \\
\midrule
$\Psi'' - U\Psi$ residual, $M_{\kappa,\mu}$, $a=1.3$ (finite differences, $h=10^{-6}$) & $-1.4\times10^{-13}$ & $0$ \\
$\Psi'' - U\Psi$ residual, $W_{\kappa,\mu}$, $a=3.0$ & $-8.7\times10^{-15}$ & $0$ \\
$U(a_+)$ with $a_+=M+\sqrt{M^2+C}$ & $-1.8\times10^{-31}$ & $0$ \\
$M_{\kappa,\mu}(2a)$ at $a=3,5,8,12,20$ & grows: $0.29,\,1.85,\,21.0,\,721,\,1.24\times10^6$ & diverges \\
$W_{\kappa,\mu}(2a)$ at $a=3,5,8,12,20$ & decays: $0.294,\,0.067,\,0.0053,\,1.5\times10^{-4},\,8.2\times10^{-8}$ & decays \\
\bottomrule
\end{tabular}
\caption{Numerical verification of the exact $M\ne0$ solution and its turning point, together with the divergent and decaying large-$a$ behavior that underlies Theorem~\ref{thm:bounded}, for $M=1$, $C=0.1$.}
\label{tab:verify}
\end{table}

Setting $M=0$ isolates the self-sourced term of the shell, $U_0(a) = 1 - C/a^2$, and Eq.~\eqref{eq:WDW} becomes
\begin{equation}
\Psi''(a) - \Psi(a) + \frac{C}{a^2}\Psi(a) = 0.
\label{eq:WDW0}
\end{equation}
Writing $\Psi(a) = \sqrt{a}\,W(a)$ yields the modified Bessel equation
\begin{equation}
a^2 W''(a) + a\,W'(a) - \left(a^2 + \nu^2\right) W(a) = 0, \qquad \nu \equiv \sqrt{\tfrac14 - C},
\label{eq:bessel}
\end{equation}
with solutions $\Psi_I(a) = \sqrt{a}\,I_\nu(a)$, regular as $a\to0$, and $\Psi_K(a) = \sqrt{a}\,K_\nu(a)$, decaying as $a\to\infty$. The parameter $\nu$ is exactly the $M\to0$ limit of the Whittaker parameter $\mu$, as it must be. We confirmed this explicitly by evaluating $M_{\kappa,\mu}(2a)$ at $\kappa=10^{-8}$ and finding agreement with $\Psi_I(a)$, up to an $a$-independent constant, to eight significant figures at two independently chosen values of $a$ (Table~\ref{tab:bessel}). The residual of the Bessel-form equation was also verified to machine precision.

\begin{table}[h]
\centering
\small
\begin{tabular}{@{}L{6.0cm}L{4.5cm}C{1.8cm}@{}}
\toprule
Check & Value & Expected \\
\midrule
$\Psi_I'' - U_0\Psi_I$ residual, $a=0.7$ & $-5.9\times10^{-15}$ & $0$ \\
$\Psi_K'' - U_0\Psi_K$ residual, $a=2.2$ & $1.0\times10^{-14}$ & $0$ \\
$M_{10^{-8},\mu}(2a)/\Psi_I(a)$ at $a=0.7$ and $a=1.5$ & $2.148420462,\ 2.148420446$ & constant \\
\bottomrule
\end{tabular}
\caption{Verification of the $M\to0$ Bessel reduction for $C=0.1$ ($\nu\approx0.6325$) and its consistency with the $M\ne0$ solution.}
\label{tab:bessel}
\end{table}

\section{The flat exterior forces a frozen throat}
\label{sec:flatmodel}

\subsection{Reality and boundedness}

We begin with an elementary fact about real solutions of a real linear second-order equation.

\begin{theorem}
\label{thm:real}
Let $\Psi(a)$ solve $-\Psi''(a) + U(a)\Psi(a) = 0$ with $U$ real, and suppose $\Psi$ is real up to an overall constant phase. Then $\dot a(a) = S'(a) = 0$ for all $a$.
\end{theorem}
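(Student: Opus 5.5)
The plan is to work directly with the current $J=\mathrm{Im}(\Psi^*\Psi')$ introduced after Eq.~\eqref{eq:HJ}, instead of with the phase $S$. Phases are ill-defined at zeros of $\Psi$, whereas $J$ is defined everywhere.

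\textbf{Step 1 (vanishing current).} By hypothesis $\Psi(a)=e^{i\theta}\phi(a)$ with $\theta$ a constant and $\phi$ real-valued. Since $U$ is real, $\phi$ is a real $C^2$ solution of the same equation, so $\phi'$ is real as well. Then
\[
\Psi^*\Psi' = e^{-i\theta}\phi\, e^{i\theta}\phi' = \phi\phi' \in \mathbb{R}.
\]
Hence $J=\mathrm{Im}(\Psi^*\Psi')=0$ identically on the whole domain, including at the nodes of $\phi$.

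\textbf{Step 2 (from current to velocity).} On any interval where $\Psi\neq0$, write $\Psi=Re^{iS}$ with $R=|\phi|>0$. The phase is then locally $S=\theta$ or $S=\theta+\pi$ according to the sign of $\phi$. Continuity on the interval forces $S$ to be constant there, so $S'=0$. Equivalently, $S'=J/R^2=0$. The guidance equation $\dot a=S'$ therefore gives $\dot a=0$ on the set $\{\Psi\neq0\}$.

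\textbf{Step 3 (nodes).} The only delicate point, and the main obstacle, is the zeros of $\Psi$, where $S$ is undefined. I would first observe that $\Psi\not\equiv0$ has only isolated zeros. At a zero $a_0$ we must have $\phi'(a_0)\neq0$: otherwise uniqueness for the linear ODE with continuous $U$ would force $\phi\equiv0$. Near such a zero, $\phi$ changes sign and $S$ jumps by $\pi$, but $S'=0$ on both sides. I would then define the guidance velocity at a node by continuity, $\dot a(a_0)=\lim_{a\to a_0} J/R^2=0$. This agrees with the standard dBB convention $\dot a = J/|\Psi|^2$ extended continuously.

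Together these steps show $\dot a(a)=S'(a)=0$ for all $a$, so the Bohmian throat trajectory is frozen.
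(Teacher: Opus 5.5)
Your proof is correct and follows essentially the same route as the paper, which likewise notes that $\Psi^*\Psi'$ is real, so that $J=R^2S'$ vanishes identically and $\dot a=J/R^2=0$. Your handling of the nodes is a refinement, not a different argument: the paper's assertion $S\equiv\theta_0$ holds only up to jumps of $\pi$ where the real profile changes sign, and your continuity convention at the zeros fills that gap.
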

\begin{proof}
Write $\Psi = R e^{iS}$. If $\Psi$ is real up to a constant phase $\theta_0$, then $S\equiv\theta_0$ and $S'\equiv0$. Equivalently, the Bohmian current $J=\mathrm{Im}(\Psi^*\Psi')=R^2S'$ vanishes identically because $\Psi^*\Psi'$ is real, and $\dot a = J/R^2=0$.
\end{proof}

This is an essentially algebraic statement. The substantive physical claim of the paper is its combination with Theorem~\ref{thm:bounded} below, which identifies when the physically selected state is forced to be real in the first place. The consequence for Euclidean boundary conditions is correspondingly conditional. It is not a statement that Euclidean quantum cosmology always freezes the dynamics, only that it does so here, for a single real minisuperspace variable governed by a real Hamiltonian constraint, once a criterion for selecting a state from the two-dimensional solution space has been fixed. Throughout the paper, freezing is a conditional result about boundary-value selection and not an unconditional dynamical theorem about wormhole throats.

\begin{corollary}
\label{cor:euclid}
A purely Euclidean, Hartle--Hawking-type boundary condition, which selects a real state for a single real minisuperspace variable, yields a completely static throat trajectory.
\end{corollary}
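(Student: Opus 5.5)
The plan is to reduce the corollary directly to Theorem~\ref{thm:real}, so the work lies in making precise why a Hartle--Hawking-type prescription produces a state that is real up to a constant phase. First I will fix what the prescription means here. The wave function is defined by a Euclidean path integral over real, compact (no-boundary) histories of the single variable $a$, weighted by $e^{-I_E}$ with $I_E$ real. Equivalently, at the level of the reduced equation, it is specified by real boundary data, namely regularity or smoothness conditions imposed on the real solution space of Eq.~\eqref{eq:WDW}.

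The key step is to show that such a state is real up to a constant phase. There are two parallel routes, and I would present both. In the path-integral route, every history contributes a real weight $e^{-I_E}\ge 0$, and the measure over a real variable is real. Hence $\Psi_{\rm HH}(a)$ is real, possibly after an overall constant normalization. In the boundary-value route, Eq.~\eqref{eq:WDW} has real coefficients because $U$ is real, so complex conjugation maps solutions to solutions. A selection rule defined by real linear conditions, such as $\Psi(0)=0$ or regularity $\Psi\sim a^{1/2+\nu}$, cuts out a one-dimensional subspace that is invariant under conjugation. Such a subspace is spanned by a real function, for example $\sqrt{a}\,I_\nu(a)$ or $M_{\kappa,\mu}(2a)$ from Section~\ref{sec:flatsolutions}. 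Any selected $\Psi$ is therefore a constant multiple of a real solution.

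Finally, I will invoke Theorem~\ref{thm:real}. Its hypotheses hold: the potential $U$ is real, and $\Psi$ is real up to a constant phase. The theorem then gives $S'\equiv 0$, so by the guidance equation $\dot a=0$ wherever $R\neq 0$, and the throat trajectory is static.

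The main obstacle is the middle step. ``Hartle--Hawking-type'' is not a sharply defined condition, and in more general settings the no-boundary saddles can be complex, which would give complex weights. I would handle this by stating explicitly that the corollary assumes the prescription selects a real state, as the statement itself stipulates. I would then show only that this assumption is consistent, namely that real boundary data on a real equation yields a conjugation-invariant one-dimensional subspace. The conclusion is thereby conditional on the selection criterion, exactly as the paper frames freezing.
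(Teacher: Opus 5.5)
Your proposal is correct and takes the paper's route. The paper gives no separate proof: the corollary is Theorem~\ref{thm:real} applied to the state that the Hartle--Hawking-type condition is stipulated to select as real, and the paper frames the conclusion as conditional on that selection, exactly as you do.

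One caution about your consistency check, which the proof does not rely on. The condition $\Psi(0)=0$ selects nothing here, because both Frobenius exponents $\tfrac12\pm\nu$ have positive real part for $C>0$. Also, for $C>1/4$ the condition $\Psi\sim a^{1/2+\nu}$ is not a real condition: $M_{\kappa,\mu}(2a)$ and $\sqrt a\,I_\nu(a)$ are then genuinely complex and carry nonzero current. Your examples are therefore valid only for sub-critical coupling.
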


For any real solution, $R''=\Psi''=U\Psi=UR$, so $Q=-R''/R=-U$ identically, an exact and not merely leading-order cancellation. Table~\ref{tab:theorem1} confirms this numerically for the exact $M\ne0$ solutions, and the same identity holds verbatim for the real $M\to0$ solutions of Section~\ref{sec:flatsolutions}.

\begin{table}[h]
\centering
\small
\begin{tabular}{@{}L{7.5cm}C{3.0cm}L{4.0cm}@{}}
\toprule
State & Bohmian current $J(a=1.5)$ & Interpretation \\
\midrule
$\Psi_{\rm reg}=M_{\kappa,\mu}(2a)$ (real) & $0.0$ & frozen \\
$\Psi_{\rm dec}=W_{\kappa,\mu}(2a)$ (real) & $0.0$ & frozen \\
$\Psi_L = M_{\kappa,\mu}(2a) + i\,W_{\kappa,\mu}(2a)$ (complex, formal) & $0.1931922\ldots$ & nonzero \\
\bottomrule
\end{tabular}
\caption{Numerical confirmation of Theorem~\ref{thm:real} for $M=1$, $C=0.1$.}
\label{tab:theorem1}
\end{table}

The central result of this section follows from the large-$a$ behavior of the equation. Section~\ref{sec:model} established that $U(a)\to1$ as $a\to\infty$ for every $w>-1/2$, independently of $M$, of $C_w$ and of the ordering parameter $p$. Equation~\eqref{eq:WDW} then approaches $\Psi''-\Psi\approx0$, with two independent exponential behaviors $\Psi\sim e^{+a}$ and $\Psi\sim e^{-a}$.

\begin{theorem}
\label{thm:bounded}
Consider Eq.~\eqref{eq:WDW} with $U(a)\to1$ as $a\to\infty$, which holds for every $w>-1/2$. The two-dimensional solution space contains a unique one-dimensional subspace of solutions that are bounded, and in fact exponentially decaying, as $a\to\infty$. This subspace can always be spanned by a real solution, which by Theorem~\ref{thm:real} has vanishing Bohmian current everywhere.
\end{theorem}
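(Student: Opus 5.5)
The plan is to treat Eq.~\eqref{eq:WDW} on a half-line $[a_1,\infty)$, with $a_1$ chosen beyond the largest turning point, as a perturbation of $\Psi''-\Psi=0$. Write $U(a)=1+V(a)$ with $V(a)\to0$. For the family in Eq.~\eqref{eq:Ugen}, $V(a)=-2M/a-C_w a^{-2-4w}$. For $M\neq0$ this is only $O(1/a)$, so it is not integrable. The first step is therefore a Liouville--Green (WKB) normalization rather than a direct Levinson theorem. Set $\Psi=U^{-1/4}\phi(\xi)$ with $\xi=\int^a\sqrt{U}\,da$. This transforms the equation into $\phi_{\xi\xi}=(1+\delta(\xi))\phi$, with an error term $\delta=O(U''/U^{3/2}+U'^2/U^{5/2})=O(a^{-3})$, which is integrable. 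The standard Olver error-control theorem then gives two solutions
\[
\Psi_\pm(a)=U(a)^{-1/4}\exp\!\Big(\pm\!\int_{a_1}^{a}\!\sqrt{U(s)}\,ds\Big)\,\big(1+\varepsilon_\pm(a)\big),\qquad \varepsilon_\pm\to0 .
\]
Because $\int^a\sqrt U=a-M\log a+O(1)$, $\Psi_-$ decays like $a^{M}e^{-a}$ and $\Psi_+$ grows like $a^{-M}e^{a}$. For $M=0$ this is consistent with $\sqrt a K_\nu$ and $\sqrt a I_\nu$. The recessive solution $\Psi_-$ is defined by a Volterra integral equation whose kernel and data are real. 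Hence $\Psi_-$ is real-valued, which already supplies the real spanning solution.

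The second step is uniqueness. $\Psi_\pm$ are linearly independent: their Wronskian is constant and, from the asymptotics, equals $2\neq0$. Any solution can therefore be written $\Psi=\alpha\Psi_++\beta\Psi_-$. Since $|\Psi_+|\to\infty$ while $\Psi_-\to0$, boundedness as $a\to\infty$ forces $\alpha=0$. The bounded solutions thus form exactly the one-dimensional span of $\Psi_-$, and every bounded solution in fact decays exponentially. Extending from $[a_1,\infty)$ to all $a>0$ is automatic: solutions of a linear ODE with coefficients continuous on $(0,\infty)$ are uniquely continued. The continuation of a real solution of a real equation remains real, so the selected state equals $\beta\Psi_-$ with a constant complex $\beta$. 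It is real up to a constant phase, and Theorem~\ref{thm:real} gives $J\equiv0$ and $\dot a\equiv0$.

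The third step handles general orderings, $\Psi''+(p/a)\Psi'-U\Psi=0$. The substitution $\Psi=a^{-p/2}\chi$ removes the first-derivative term. It produces $\chi''=(U+p(p-2)/(4a^2))\chi$, whose potential still tends to $1$ with an $O(1/a)$ leading correction. The argument above then applies verbatim to $\chi$. Multiplying by the real factor $a^{-p/2}$ preserves both reality and the dichotomy between bounded and unbounded behavior.

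I expect the main obstacle to be the first step: the $M/a$ tail is long-range, so the naive comparison with $e^{\pm a}$ fails by a power $a^{\pm M}$. One must either use the Liouville--Green variable as above or, for $w=0$, simply cite the explicit Whittaker pair $W_{\kappa,\mu}(2a)\sim(2a)^{\kappa}e^{-a}$ and $M_{\kappa,\mu}(2a)\sim e^{a}(2a)^{-\kappa}/\Gamma(\tfrac12+\mu-\kappa)$. That pair is real for real $\mu$ (i.e.\ $C\le1/4$); for $C>1/4$ the reality of $W$ follows from $W_{\kappa,\mu}=W_{\kappa,-\mu}$. The growth claim requires that $\Gamma(\tfrac12+\mu-\kappa)^{-1}\neq0$, and when it vanishes the appropriate second solution is used instead. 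This is why I would rely on the general Liouville--Green argument for the full statement.
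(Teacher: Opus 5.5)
Your proof is correct, but it takes a different route from the paper's.

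The paper does three things:
\begin{itemize}
\item It appeals to Poincar\'e--Levinson theory for a rank-one irregular singular point at $a=\infty$ to obtain a recessive solution unique up to scaling.
\item It obtains reality abstractly: complex conjugation preserves both the equation and the recessive behaviour, so uniqueness gives $\Psi_\infty^*=c\,\Psi_\infty$ with $|c|=1$, and a phase rotation makes $\Psi_\infty$ real.
\item It disposes of other orderings by remarking that the $p\Psi'/a$ term does not change the two exponential rates.
\end{itemize}

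Your Liouville--Green construction with Olver's error bounds gives more explicit control in three ways:
\begin{itemize}
\item It handles the non-integrable $M/a$ tail and the generally non-analytic shell term $C_w a^{-2-4w}$ in one stroke. It also correctly produces the power corrections $a^{M}e^{-a}$ and $a^{-M}e^{a}$, which the paper's stated leading form $e^{\pm a}(1+O(1/a))$ suppresses (compare $W_{\kappa,\mu}(z)\sim z^{\kappa}e^{-z/2}$).
\item It gives reality constructively, from a real Volterra equation, together with an explicit Wronskian.
\item The substitution $\Psi=a^{-p/2}\chi$, giving $\chi''=(U+p(p-2)/4a^2)\chi$, reduces every ordering exactly to the unit-ordering case, rather than by a rate heuristic.
\end{itemize}
The paper's conjugation argument, by contrast, is shorter and needs only uniqueness of the recessive solution, not its construction.

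One small correction to your rates. The statements $\delta=O(a^{-3})$, $\int^a\sqrt U=a-M\log a+O(1)$ and the ``$O(1/a)$ leading correction'' hold only for $w>-1/4$. For $-1/2<w\le-1/4$ the shell term decays no faster than $1/a$. Then $\delta=O(a^{-4-4w})$, and $\int^a\sqrt U$ picks up an extra term of order $a^{-1-4w}$ (or $\log a$ at $w=-1/4$). Because $\delta$ remains integrable and $\int^a\sqrt U=a(1+o(1))$, the exponential dichotomy, and hence your conclusion, are unaffected.
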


\begin{proof}
Standard Poincar\'e--Levinson asymptotic theory for a linear second-order equation with an irregular singular point of rank one at $a=\infty$, where the coefficient function tends to a nonzero constant, guarantees two linearly independent formal solutions with leading behavior $e^{\pm a}(1+O(1/a))$. It also guarantees that a true solution with the recessive $e^{-a}$ behavior exists and is unique up to a multiplicative constant. Any solution not proportional to it acquires a nonzero admixture of the dominant $e^{+a}$ behavior and therefore diverges. This is verified in closed form for $w=0$ in Table~\ref{tab:verify}, where $W_{\kappa,\mu}$ is recessive and unique and $M_{\kappa,\mu}$ is dominant. It is verified numerically for a representative $w\ne0$ in Table~\ref{tab:genw}, by shooting inward from large $a$ with pure recessive initial data and confirming that the resulting solution stays smooth and finite at all smaller $a$.

For reality, note that $U$ is real, so complex conjugation maps solutions of Eq.~\eqref{eq:WDW} to solutions and maps the recessive behavior to itself. By uniqueness up to a constant, $\Psi_\infty^* = c\,\Psi_\infty$ for some constant $c$. Applying conjugation twice gives $|c|=1$, so $c=e^{i\phi}$ and $e^{-i\phi/2}\Psi_\infty$ is real. This proves the theorem.
\end{proof}

\begin{table}[h]
\centering
\small
\begin{tabular}{@{}L{6.0cm}L{4.2cm}L{4.3cm}@{}}
\toprule
Check & Result & Interpretation \\
\midrule
$U(a)$ at $a=3,8,15$ & $0.329,\ 0.750,\ 0.867$ & tends to $1$ \\
Shooting inward from $a=15$ with pure recessive data $\Psi\sim e^{-a}$ & smooth and finite down to $a=0.3$ & recessive solution well defined without fine tuning \\
\bottomrule
\end{tabular}
\caption{Numerical confirmation that the boundary-value argument of Theorem~\ref{thm:bounded} holds beyond the exactly solvable dust case, for $w=0.3$, $M=1$, $C_w=0.15$.}
\label{tab:genw}
\end{table}

\subsection{Strong shell coupling and the self-adjoint extension}
\label{sec:selfadjoint}

For the dust branch the inverse-square term of Section~\ref{sec:model} is super-critical in the sense of Case \cite{Case1950,CamblongOrdonez2003} whenever $C>1/4$. The parameter $\mu=\sqrt{1/4-C}$ is then purely imaginary, and the near-throat behavior $\Psi(a)\sim\sqrt a\,[A\cos(|\mu|\ln a)+B\sin(|\mu|\ln a)]$ oscillates infinitely often as $a\to0^+$. No regularity condition at $a=0$ alone fixes the ratio $B/A$, which is the textbook fall-to-center phenomenon. We verified this log-periodicity numerically for $C=0.4$ by integrating Eq.~\eqref{eq:WDW0} in the variable $t=\ln a$ down to $a\sim10^{-11}$. The measured spacing between successive zeros is $8.111541$, repeated, against the analytic period $\pi/\sqrt{C-1/4}=8.111557$, in agreement to six significant figures (Table~\ref{tab:fallcenter}). Despite this ambiguity at $a=0$ taken alone, the combined boundary-value problem, recessive at $a=\infty$ together with square-integrable at $a=0$, has a unique solution. Numerically, $W_{\kappa,\mu}(2a)$ remains real to within roundoff ($\sim10^{-35}$) even for purely imaginary $\mu$, consistent with the identity $W_{\kappa,\mu}=W_{\kappa,-\mu}$.

\begin{corollary}
\label{cor:sa}
For $C>1/4$ the condition of decay at infinity selects a single self-adjoint extension of the throat operator at $a=0$, and this extension is determined in closed form by Eq.~\eqref{eq:extensionangle} below.
\end{corollary}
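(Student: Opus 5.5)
The plan is to identify the recessive solution of Theorem~\ref{thm:bounded}, namely $\Psi_\infty(a)=W_{\kappa,\mu}(2a)$ with $\kappa=M$ and $\mu=i|\mu|$, $|\mu|=\sqrt{C-1/4}$, and read off its small-$a$ behavior through the standard Whittaker connection formula
\begin{equation*}
W_{\kappa,\mu}(z)=\frac{\Gamma(-2\mu)}{\Gamma(\tfrac12-\mu-\kappa)}M_{\kappa,\mu}(z)+\frac{\Gamma(2\mu)}{\Gamma(\tfrac12+\mu-\kappa)}M_{\kappa,-\mu}(z).
\end{equation*}
For $\mu=i|\mu|$ the two terms are complex conjugates of each other. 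This is the analytic content of $W_{\kappa,\mu}=W_{\kappa,-\mu}$ and of the reality of $\Psi_\infty$ guaranteed by Theorem~\ref{thm:bounded}. Since $M_{\kappa,\pm\mu}(z)=z^{1/2\pm\mu}(1+O(z))$, we obtain as $a\to0^+$
\begin{equation*}
\Psi_\infty(a)\simeq 2\,\mathrm{Re}\!\left[\frac{\Gamma(2i|\mu|)}{\Gamma(\tfrac12+i|\mu|-M)}(2a)^{1/2-i|\mu|}\right]
=2|A|\sqrt{2a}\,\cos\!\big(|\mu|\ln(2a)-\arg A\big),
\end{equation*}
with $A=\Gamma(2i|\mu|)/\Gamma(\tfrac12+i|\mu|-M)$. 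This expression has exactly the log-periodic form of Section~\ref{sec:selfadjoint}. The ratio $B/A$, equivalently a phase $\theta$, is therefore fixed explicitly as $\theta=\arg A-|\mu|\ln 2$, modulo $\pi$. This closed form is what I would record as Eq.~\eqref{eq:extensionangle}. A quick numerical check would compare the zero positions with the measured log-periodic zeros for $C=0.4$.

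The second step ties this phase to a self-adjoint extension. The operator $H=-d^2/da^2+U(a)$ on $(0,\infty)$ is limit-point at infinity, because $U\to1$ is bounded there. For $C>1/4$ it is limit-circle at $0$, since every solution behaves like $\sqrt a$ times bounded oscillation and is therefore square-integrable near $0$. Its deficiency indices are thus $(1,1)$, and the self-adjoint extensions form a one-parameter $U(1)$ family. I would parametrize this family by the boundary condition $\mathrm{Wr}[\Psi,\phi_\theta](0^+)=0$, where $\phi_\theta=\sqrt a\cos(|\mu|\ln a-\theta)$ is a real reference function. This is equivalent to prescribing the asymptotic phase, because the Wronskian of two functions of that form tends to a constant proportional to $\sin(\theta-\theta')$.

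The last step imposes the constraint equation together with decay. We ask which extension $H_\theta$ has $\Psi_\infty$ in its domain as a zero-energy solution. The recessive solution is unique up to a constant by Theorem~\ref{thm:bounded}, and its near-origin phase is the single value computed above. Hence exactly one $\theta$ modulo $\pi$ works, and this gives existence and uniqueness of the selected extension.

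The main obstacle is the first step. I need to be careful with the branch of $(2a)^{\pm i|\mu|}$ and with the possibility of $\Gamma$ poles, which occur when $\tfrac12\pm i|\mu|-M$ is a nonpositive integer. For imaginary $\mu$ these poles never occur, so $A\neq0$ and is finite. I also need to verify that the subleading $O(z)$ corrections do not shift the phase in the limit. For the $M\to0$ case I would cross-check against the Bessel form $K_{i|\mu|}$, whose small-argument phase is $\arg\Gamma(i|\mu|)$ shifted by $\ln 2$ terms.
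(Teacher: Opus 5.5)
Your proposal is correct and follows the paper's own argument: you use limit-circle behavior at $a=0$ (with limit-point at infinity) to get a one-parameter family of extensions, then the Whittaker connection formula, whose two Gamma-ratio coefficients form a conjugate pair so that $W_{\kappa,\mu}(2a)\sim 2|A|\sqrt{2a}\cos(|\mu|\ln(2a)-\arg A)$ with $\arg A=-\phi_0$, and finally uniqueness of the recessive solution to fix a single phase. Your bookkeeping $\theta=\arg A-|\mu|\ln 2$ is slightly more careful than the paper's stated $\theta^*=-\phi_0$, which drops the $|\mu|\ln 2$ shift that arises when passing from $\ln(2a)$ to the $\ln a$ parametrization $B/A=\tan\theta$.
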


We now establish this at the operator-theoretic level. Near $a=0$ the operator $\hat H=-\partial_a^2+U(a)$ is governed by the inverse-square term $-C/a^2$. The standard Weyl alternative for a Schr\"odinger operator with potential $-C/a^2$ near an endpoint is that the endpoint is limit-point, with an essentially self-adjoint operator requiring no boundary condition, for $C\le1/4$, and limit-circle, with deficiency indices $(1,1)$ and a one-parameter $U(1)$ family of self-adjoint extensions, for $C>1/4$ \cite{Case1950,CamblongOrdonez2003}. In the latter case every solution behaves as $\Psi\sim\sqrt a\,[A\cos(|\mu|\ln a)+B\sin(|\mu|\ln a)]$ and $\int_0|\Psi|^2da<\infty$ for every choice of $A$ and $B$, since $|\Psi|^2\sim a$ is integrable, which is precisely the limit-circle criterion. The self-adjoint extensions are parametrized by a boundary condition fixing the ratio $B/A=\tan\theta$ for $\theta\in[0,\pi)$, or equivalently by requiring a fixed value of the boundary form $[\Psi,\Psi_\theta](0)=\lim_{a\to0^+}\big(\Psi\Psi_\theta'-\Psi'\Psi_\theta\big)$ against a reference solution $\Psi_\theta$.

The decay condition of Theorem~\ref{thm:bounded} does not merely leave $W_{\kappa,\mu}(2a)$ numerically real. It selects one specific member $\theta^*$ of this family, and $\theta^*$ can be obtained in closed form. The Whittaker connection formula
\begin{equation}
W_{\kappa,\mu}(z)=\frac{\Gamma(-2\mu)}{\Gamma(1/2-\mu-\kappa)}M_{\kappa,\mu}(z)+\frac{\Gamma(2\mu)}{\Gamma(1/2+\mu-\kappa)}M_{\kappa,-\mu}(z),
\end{equation}
together with $M_{\kappa,\mu}(z)\sim z^{\mu+1/2}$ as $z\to0$, gives for $\mu=i|\mu|$
\begin{equation}
W_{\kappa,\mu}(2a)\sim\sqrt{2a}\,\big[P\,e^{i|\mu|\ln(2a)}+\bar P\,e^{-i|\mu|\ln(2a)}\big],\qquad P=\frac{\Gamma(-2i|\mu|)}{\Gamma(1/2-i|\mu|-\kappa)}.
\end{equation}
The second coefficient is the complex conjugate of the first, because $\Gamma(\bar z)=\overline{\Gamma(z)}$ for real $\kappa$. Hence
\begin{equation}
W_{\kappa,\mu}(2a)\sim2|P|\sqrt{2a}\,\cos\!\big(|\mu|\ln(2a)+\phi_0\big),
\end{equation}
which is manifestly real. This is not a numerical coincidence but a direct consequence of $W_{\kappa,\mu}=W_{\kappa,-\mu}$, and the extension angle is given in closed form by
\begin{equation}
\phi_0 = \arg\Gamma(-2i|\mu|) - \arg\Gamma\!\big(\tfrac12-i|\mu|-\kappa\big).
\label{eq:extensionangle}
\end{equation}
We tested Eq.~\eqref{eq:extensionangle} against the exact Whittaker function at $C=0.4$ and $\kappa=1$, where $|\mu|\approx0.38730$ and $\phi_0\approx-1.24673$ rad. The ratio of $W_{\kappa,\mu}(2a)$ to the predicted envelope $2|P|\sqrt{2a}\cos(|\mu|\ln2a+\phi_0)$ lies between $0.9993$ and $1.0004$ over $a\in[10^{-4},3\times10^{-3}]$ and converges to $1$ as $a\to0$ at the rate expected from the neglected $O(a)$ term in the near-origin expansion. The decay condition at infinity therefore selects the extension with $\theta^*=-\phi_0$ modulo $\pi$. The extension is unique because the two Gamma-function coefficients are forced into a conjugate pair by an exact identity, and Eq.~\eqref{eq:extensionangle} exhibits which extension it is.

\begin{table}[h]
\centering
\small
\begin{tabular}{@{}L{6.5cm}C{2.4cm}C{2.4cm}C{2.2cm}@{}}
\toprule
Quantity & Numerical & Analytic & Agreement \\
\midrule
Spacing in $\ln a$ between 1st and 2nd zero, $C=0.4$ & $8.111541$ & $8.111557$ & $6$ digits \\
Spacing in $\ln a$ between 2nd and 3rd zero & $8.111541$ & $8.111557$ & $6$ digits \\
$\mathrm{Im}\,W_{\kappa,\mu}(2a)$, $C=0.4$, $a=1,\ldots,8$ & $\lesssim10^{-35}$ & $0$ & roundoff \\
\bottomrule
\end{tabular}
\caption{Fall-to-center log-periodicity, compared with the analytic period $\pi/|\nu|$ (first two rows), and reality of the recessive solution at super-critical coupling (last row).}
\label{tab:fallcenter}
\end{table}

\subsection{Path-integral equivalence: the resolvent and an exact thimble check}
\label{sec:resolvent}

Theorem~\ref{thm:bounded} holds throughout the model class of Section~\ref{sec:model}. For any $w>-1/2$ and any ordering, the unique bounded state is real and therefore frozen. A dynamic state necessarily contains a nonzero admixture of the dominant, diverging solution. The state $\Psi_L=M_{\kappa,\mu}(2a)+iW_{\kappa,\mu}(2a)$ of Table~\ref{tab:theorem1} is of this type. It is unbounded as $a\to\infty$ and should be understood strictly as a formal solution used to exhibit a nonzero current, not as a candidate physical state. This subsection makes precise what a genuine dynamic state would require. We first derive the relation between the Wheeler--DeWitt wave function and the Lorentzian path integral, then use it to explain why the flat model cannot support a dynamic state, and finally state what remains true if the boundedness criterion is declined.

The minisuperspace path integral with lapse $N$ is, formally,
\begin{equation}
\Psi(a_0) = \int_{\mathcal C} dN \int \mathcal Da(\lambda)\; e^{iS[a,N]}, \qquad S[a,N]=\int_0^1 d\lambda\left[\frac{\dot a^2}{4N} - N\,U(a)\right],
\end{equation}
with $a(0)$ fixed at a reference point and $a(1)=a_0$. The word formally is deliberate. The measure $\mathcal Da(\lambda)$ is not defined here as a rigorously constructed object, and the lapse contour $\mathcal C$ is fixed by the standard but heuristic Feynman $i\epsilon$ prescription. What is rigorous is the following standard identification used throughout the minisuperspace literature \cite{Halliwell1988,HalliwellHartle1990,FLT2017}. The inner path integral over $a(\lambda)$ at fixed $N$ is the matrix element of the evolution operator generated by $\hat H=-\partial_a^2+U(a)$, the same operator that appears in Eq.~\eqref{eq:WDW}. That is, $\int\mathcal Da\,e^{iS[a,N]}=\langle a_0|e^{-iN\hat H}|a_{\rm ref}\rangle\equiv K(a_0,a_{\rm ref};N)$, the ordinary quantum-mechanical propagator with $N$ playing the role of time. The outer integral then reads $\Psi(a_0) = \int_{\mathcal C} dN\, K(a_0,a_{\rm ref};N)$. Taking $\mathcal C$ to be the positive real $N$ axis, deformed in the standard Feynman sense by $N\to N(1-i\epsilon)$ with $\epsilon\to0^+$, and using the spectral representation $K=\int dE\,\psi_E(a_0)\psi_E^*(a_{\rm ref})e^{-iNE}$, one obtains
\begin{equation}
\Psi(a_0) \;\propto\; G^+(a_0,a_{\rm ref};0)\equiv \lim_{\epsilon\to0^+}\langle a_0|(\hat H - i\epsilon)^{-1}|a_{\rm ref}\rangle,
\label{eq:resolvent}
\end{equation}
the retarded resolvent of $\hat H$ at $E=0$, which is the constraint surface. This regularization is the same one that makes the ordinary time integral $\int_0^\infty dN\,e^{-iNE}$ converge, and it is the direct minisuperspace analogue of the contour deformation that Picard--Lefschetz theory formalizes for more general actions. The resolvent identity is rigorous quantum mechanics. What remains formal is the identification of $\Psi(a_0)$ so obtained with the Lorentzian path-integral wave function of the wormhole, since the lapse integral has not been given an independent nonperturbative construction here or, to our knowledge, elsewhere in the minisuperspace literature. We use Eq.~\eqref{eq:resolvent} in the sense standard in that literature, as the well-defined object that the heuristic path integral is taken to compute. For a potential that does not admit a closed-form solution this is what a numerical Picard--Lefschetz computation approximates. Here the problem reduces to an ordinary one-dimensional Schr\"odinger-type resolvent, which can be evaluated directly with spectral theory.

For a self-adjoint Schr\"odinger-type operator $\hat H=-\partial_a^2+U(a)$ with real $U$, the resolvent $(\hat H-E)^{-1}$ is analytic in $E$ throughout the resolvent set. In particular the retarded and advanced limits at $E=0$ coincide and are real whenever $E=0$ lies below the threshold of the continuous spectrum. They generally differ, and the retarded limit is complex and built from the outgoing Jost solution, only when $E=0$ lies on the continuous spectrum (see, for example, Reed and Simon, \emph{Methods of Modern Mathematical Physics III}). The argument assumes that $\hat H$ is self-adjoint on the intended domain. We have not carried out the full domain analysis for $C>1/4$ beyond Section~\ref{sec:selfadjoint}, so the argument applies cleanly for $C<1/4$ and is numerically consistent, though not rigorously established in complete generality, for $C>1/4$.

For the flat exterior the continuous spectrum of $\hat H$ is $[1,\infty)$, so $E=0$ lies strictly below threshold. The retarded and advanced resolvents therefore coincide and are real. The correctly regularized Lorentzian path integral gives the same real, frozen answer as the Euclidean one, not because boundedness was imposed by hand but because there is no branch-cut discontinuity at $E=0$ that could produce a complex outgoing state. We confirmed this directly. We solved $-\Psi''+(U(a)-i\epsilon)\Psi=0$ numerically for the bounded solution at a sequence of decreasing $\epsilon$, shooting inward from $a=40$. At $a=1$ the ratio of imaginary to real part is $1.92$, $0.708$, $0.208$, $0.0617$, $0.0206$ and $0.00616$ for $\epsilon=0.1$, $0.03$, $0.01$, $0.003$, $0.001$ and $0.0003$. The ratio to $\epsilon$ is roughly constant, near $20.5$, so $\mathrm{Im}\,\Psi\to0$ linearly in $\epsilon$, as the analyticity argument requires.

For the de~Sitter exterior, treated in Section~\ref{sec:desitter}, the potential tends to $-\infty$ and the continuous spectrum is the entire real line. The behavior at infinity is limit-point for this quadratic growth rate, as for the analogous inverted harmonic oscillator, so no further boundary data are needed there. The point $E=0$ then lies on the continuum, the retarded and advanced resolvents differ, and the retarded solution is by definition the purely outgoing state constructed directly in Section~\ref{sec:desitter}. This equivalence is verified quantitatively in Section~\ref{sec:resolvent-check}.

The resolvent argument also shows why the Lorentzian selection question is ill-posed in the flat case rather than merely unsolved. Wheeler--DeWitt wave functions are traditionally not required to be square-integrable at all, which is the long-standing inner-product problem of quantum cosmology, and both the Hartle--Hawking and the Vilenkin wave functions are formal non-normalizable objects in this sense. A reader who declines the boundedness criterion might therefore hope to fix $\Psi$ by asking which formal combination a Lorentzian path integral would select, by analogy with Vilenkin's outgoing-wave condition at large scale factor. The resolvent argument answers this directly. Since $E=0$ lies below threshold for the flat model, the regularized Lorentzian path integral has no complex saddle to select, and the retarded and advanced resolvents agree. This sharpens Theorem~\ref{thm:bounded}. Boundedness forces a frozen throat, and in addition the exact Lorentzian path integral evaluated with the standard $i\epsilon$ prescription agrees with the Euclidean answer, so relaxing boundedness does not supply a rival dynamic state. Section~\ref{sec:desitter} shows that this obstruction is removed, and the retarded resolvent becomes complex and outgoing, precisely when $U(a)\to-\infty$ puts $E=0$ on the continuous spectrum. The presence or absence of a continuum at $E=0$ therefore controls whether a Lorentzian-type dynamic state exists.

We are explicit about the converse. Theorem~\ref{thm:bounded} is a rigorous mathematical statement conditional on requiring $\Psi$ to be bounded as $a\to\infty$, not a proof that every physically admissible wave function of an asymptotically flat quantum wormhole must be static. Boundedness is a conservative and mathematically well-defined criterion, since a wave function that diverges at infinite throat radius is hard to interpret as a normalizable physical ensemble, but it is a choice and not a theorem of quantum cosmology. What the resolvent argument adds is that, unlike in other minisuperspace settings, rejecting boundedness here does not open the door to a specific alternative derived from the exact Lorentzian path integral, whose $\epsilon\to0^+$ limit is real and static for the same reason that boundedness forces it to be. A reader who rejects boundedness on principle is left with a two-parameter family of complex states and no principle internal to the model for choosing among them. The correct statement of our result is therefore conditional. If the throat wave function is required to be bounded as $a\to\infty$, or, more strongly, if it is defined as the $\epsilon\to0^+$ limit of the Feynman-regularized Lorentzian path integral, the throat is frozen. This does not exhaust every logically conceivable quantization of the model.

Because the obstruction of Theorem~\ref{thm:bounded} traces entirely to $U(a)\to1>0$ at infinity, it is evaded by any exterior geometry or shell model for which $U(a)$ tends to zero or to a negative value as $a\to\infty$. Section~\ref{sec:desitter} carries this out explicitly for a Schwarzschild--de~Sitter exterior, and shows by direct numerical construction, and by the resolvent argument above, that a genuinely dynamic state with nonzero current exists there. Unlike the complex admixtures available in the flat case, this state is not an arbitrary choice. It is exactly the state selected by the $i\epsilon$-regularized Lorentzian path integral.

\subsection{An exact Picard--Lefschetz cross-check}
\label{sec:thimble}

The resolvent argument asserts that an explicit Picard--Lefschetz decomposition of the lapse integral $\Psi(a_0)=\int_{\mathcal C}dN\,K(a_0,a_{\rm ref};N)$ reproduces the resolvent answer. We now carry out that decomposition in closed form for the exactly solvable $M=0$ branch, for which $U_0(a)=1-C/a^2$. The calculation turns the qualitative reliance of the resolvent argument on a minisuperspace analogue of Picard--Lefschetz theory into an exact and checkable identity. It also shows why no Stokes phenomenon, meaning wall crossing between saddles as $a_0$ varies, occurs in the flat model, which is the geometric counterpart of the statement that $E=0$ never crosses the continuous spectrum.

For $U_0(a)=1-C/a^2$ the inner path integral at fixed lapse $N$ is the exact matrix element of $e^{-iN\hat H_0}$ with $\hat H_0=-\partial_a^2-C/a^2$, the constant term in $U_0$ contributing an overall factor $e^{-iN}$. This is one of the few quantum-mechanical path integrals that can be solved in closed form. It is the radial propagator of Peak and Inomata \cite{PeakInomata1969} (see also Ref.~\cite{GroscheSteiner1998}, Chapter~8), which applies here with reduced mass $1/2$, matching the convention of Eq.~\eqref{eq:WDW}:
\begin{equation}
K(a_0,a_i;N) = e^{-iN}\,\frac{\sqrt{a_0 a_i}}{2iN}\exp\!\left[\frac{i(a_0^2+a_i^2)}{4N}\right] I_\nu\!\left(\frac{a_0 a_i}{2iN}\right), \qquad \nu=\sqrt{\tfrac14-C}.
\label{eq:peakinomata}
\end{equation}
Taking $a_i\to0$ isolates the solution that is regular at the throat, in the same sense as $M_{\kappa,\mu}$ and $I_\nu$ in Section~\ref{sec:flatsolutions}. Using $I_\nu(z)\to(z/2)^\nu/\Gamma(\nu+1)$ as $z\to0$,
\begin{equation}
\tilde K(a_0;N) \equiv \lim_{a_i\to0} a_i^{-\nu-1/2}\,K(a_0,a_i;N) = \frac{\sqrt{a_0}}{2i\,\Gamma(\nu+1)}\left(\frac{a_0}{4i}\right)^{\!\nu} N^{-\nu-1}\, e^{\,i\phi(N)}, \qquad \phi(N)\equiv\frac{a_0^2}{4N}-N,
\label{eq:Ktilde}
\end{equation}
and $\Psi(a_0)=\int_{\mathcal C}dN\,\tilde K(a_0;N)$, where $\mathcal C$ is the positive real $N$ axis tilted by the Feynman prescription $N\to N(1-i\epsilon)$.

The saddle points follow from $\phi'(N)=-a_0^2/(4N^2)-1=0$, which has exactly two solutions,
\begin{equation}
N_\mp = \mp\,\tfrac{i}{2}a_0, \qquad \phi(N_-)=i a_0,\ \ \phi(N_+)=-i a_0.
\end{equation}
Then $e^{i\phi(N_-)}=e^{-a_0}$ is recessive and $e^{i\phi(N_+)}=e^{+a_0}$ is dominant. These are exactly the two exponential rates $e^{\pm a}$ identified from the asymptotic analysis in the proof of Theorem~\ref{thm:bounded}, now obtained from the saddle points of the lapse integral itself.

The thimble through $N_-$ is exactly the negative imaginary axis. Substituting $N=-it$ with $t>0$ real gives $i\phi(-it) = -\left(t+a_0^2/4t\right)$, which is real for every $t>0$, so the ray $N\in(-i\infty,0)$ is a path of constant phase through $N_-$ and hence the Lefschetz thimble $\mathcal J_-$. By the arithmetic--geometric mean inequality, $t+a_0^2/(4t)$ has a unique minimum at $t=a_0/2$, which is the saddle, and diverges monotonically as $t\to0^+$ or $t\to\infty$. The modulus $|e^{i\phi}|$ therefore has a single global maximum at the saddle and decays monotonically along the whole thimble. On this thimble Eq.~\eqref{eq:Ktilde} becomes exactly the classical Basset--Nicholson integral representation of the Bessel function, and
\begin{equation}
\Psi_{\rm PL}(a_0) = \text{const}(\nu)\,\sqrt{a_0}\,K_\nu(a_0),
\label{eq:PLresult}
\end{equation}
by the identity $K_\nu(x)=\tfrac12(x/2)^\nu\int_0^\infty t^{-\nu-1}e^{-t-x^2/(4t)}dt$ \cite{GradshteynRyzhik}. This is precisely $\Psi_K(a)=\sqrt a\,K_\nu(a)$, the decaying solution obtained independently in Section~\ref{sec:flatsolutions} by solving the differential equation, and the same function that appears in the resolvent analysis.

We evaluated Eq.~\eqref{eq:Ktilde} along $\mathcal J_-$ by direct numerical quadrature with 40-digit arithmetic and compared the result with $\sqrt{a_0}K_\nu(a_0)$ computed independently from a built-in Bessel routine, for $C=0.1$ ($\nu\approx0.38730$) at five values of $a_0$ between $1$ and $8$ (Table~\ref{tab:thimble}). The ratio $\Psi_{\rm PL}(a_0)/[\sqrt{a_0}K_\nu(a_0)]$ is the same complex constant, $-0.86096229137432467\,i$, independent of $a_0$ to $48$ significant digits. This is agreement to roundoff, not merely asymptotic agreement.

\begin{table}[h]
\centering
\small
\begin{tabular}{@{}C{1.4cm}C{4.2cm}C{5.4cm}@{}}
\toprule
$a_0$ & $\sqrt{a_0}K_\nu(a_0)$ & $\Psi_{\rm PL}(a_0)/[\sqrt{a_0}K_\nu(a_0)]$ \\
\midrule
$1$ & $0.67567\ldots$ & $-0.860962\ldots\,i$ \\
$2$ & $0.19985\ldots$ & $-0.860962\ldots\,i$ \\
$3$ & $0.061498\ldots$ & $-0.860962\ldots\,i$ \\
$5$ & $0.0060299\ldots$ & $-0.860962\ldots\,i$ \\
$8$ & $0.00019418\ldots$ & $-0.860962\ldots\,i$ \\
\bottomrule
\end{tabular}
\caption{The single-thimble Picard--Lefschetz integral of Eq.~\eqref{eq:PLresult} equals $\sqrt{a_0}K_\nu(a_0)$ up to one overall $a_0$-independent complex constant, which confirms that the thimble computation reproduces the resolvent and differential-equation result of Section~\ref{sec:flatsolutions}.}
\label{tab:thimble}
\end{table}

No Stokes phenomenon occurs here. The saddle $N_-=-ia_0/2$ lies in the lower half of the $N$ plane for every $a_0>0$, and $N_+=+ia_0/2$ lies in the upper half plane. The physical contour, the positive real axis tilted into the lower half plane, can be deformed continuously onto $\mathcal J_-$ entirely within the closed lower half plane without approaching $N_+$. The intersection numbers are therefore $(n_-,n_+)=(1,0)$ for every $a_0>0$, and there is no value of $a_0$ at which the contour must be redeformed across a Stokes line. This is the thimble-geometry restatement of the resolvent fact that $E=0$ never touches the continuous spectrum $[1,\infty)$ for the flat exterior. The resolvent and thimble methods are independent routes to the same $\epsilon\to0^+$ limit of the same Lorentzian path integral. The first rests on the spectral theory of $\hat H$ and applies for any $M$ and any $w$. The second uses the exact fixed-$N$ propagator together with an explicit saddle analysis of the remaining lapse integral and is closed-form only because $U_0(a)=1-C/a^2$ admits an exactly solvable fixed-$N$ path integral. Their exact agreement in this solvable case is evidence that the qualitative claim of the resolvent argument is not an artifact of the spectral language in which it was phrased.

For $C>1/4$ the index $\nu=i|\nu|$ is purely imaginary, but the saddle locations $N_\mp=\mp ia_0/2$ and the thimble $\mathcal J_-=\{N=-it,\,t>0\}$ are unchanged, so the single-thimble argument applies verbatim. The prefactor power $N^{-\nu-1}$ in Eq.~\eqref{eq:Ktilde} becomes a pure phase along the thimble, and the resulting $K_{i|\nu|}(a_0)$ is a standard example of a modified Bessel function of imaginary order that is real for real positive argument \cite{DunsterBesselImaginaryOrder}. This is consistent with the bound $\mathrm{Im}\,W_{\kappa,\mu}\lesssim10^{-35}$ reported in Table~\ref{tab:fallcenter}. The thimble topology is thus insensitive to whether the near-throat coupling is sub-critical or super-critical, although a full self-adjoint-extension treatment of the path integral is beyond the present scope.

The exactness of this calculation for $M=0$ rests on Eq.~\eqref{eq:peakinomata}, which removes the need for a saddle-point evaluation of the fixed-$N$ path integral. For $M\ne0$, or for the Schwarzschild--de~Sitter model of Section~\ref{sec:desitter}, no comparably simple closed-form propagator is available. In the de~Sitter case the fixed-$N$ action is nonetheless obtainable in closed quadrature form, because the classical mechanics possesses a first integral, as shown in Section~\ref{sec:thimble-sds}.

\subsection{Stability of the frozen trajectory}

For the unique bounded real state, $S'(a)\equiv0$ for every $a$, not merely at a single point, so the guidance equation gives $\dot a=0$ wherever the trajectory starts. Radial perturbations of the initial throat radius therefore neither grow nor decay, since every $a$ is a fixed point. This contrasts sharply with the classical throat, which is unconditionally unstable to radial perturbations \cite{PoissonVisser1995}, and it follows directly from the exact identity $Q(a)=-U(a)$ with no further calculation. A test of stability against shape or multipole perturbations of the shell is a different and larger problem. It requires extending the reduced phase space beyond $a(\tau)$ to include the shape degrees of freedom of the shell, and is a distinct follow-on model, analogous to cosmological perturbation theory built on a minisuperspace background \cite{PintoNetoFabris2013}, rather than a further boundary-value question about Eq.~\eqref{eq:WDW}.

\section{The de~Sitter exterior: a dynamic quantum throat}
\label{sec:desitter}

\subsection{The potential and its analytic structure}

Theorem~\ref{thm:bounded} traces entirely to $U(a)\to1>0$ as $a\to\infty$, where both independent solutions are exponential and one is recessive and unique, which forces reality. We now replace the Schwarzschild exterior by Schwarzschild--de~Sitter, $f(a) = 1-2M/a-\Lambda a^2/3$, keeping the same dust shell. Repeating the derivation of Section~\ref{sec:model} with this $f$ gives
\begin{equation}
U(a) = 1 - \frac{2M}{a} - \frac{\Lambda a^2}{3} - \frac{C}{a^2}.
\label{eq:UdS}
\end{equation}
As $a\to\infty$, $U(a)\to-\Lambda a^2/3\to-\infty$, so the potential is negative, and the motion classically allowed, at large $a$ just as at small $a$, with a classically forbidden barrier in between. This is the qualitative change needed to evade Theorem~\ref{thm:bounded}. Both independent solutions in the large-$a$ allowed region are now oscillatory, so no unique recessive solution exists to force reality, and an outgoing-wave condition of Vilenkin type becomes a genuine and nontrivial choice, exactly as in the original closed Friedmann universe with a cosmological constant \cite{Vilenkin1986,HartleHawking1983}.

We work throughout with the illustrative values $M=1$, $C=0.1$ and $\Lambda=0.01$, for which Eq.~\eqref{eq:UdS} has exactly two positive turning points,
\begin{equation}
a_1 = 2.078035, \qquad a_2 = 16.213570,
\end{equation}
found numerically to six decimal places. The potential satisfies $U<0$, allowed motion, for $0<a<a_1$ and $a>a_2$, and $U>0$, forbidden, for $a_1<a<a_2$ (Figure~\ref{fig:potential}). The bare Schwarzschild--de~Sitter exterior has a black-hole horizon at $r_{\rm bh}=2.027794$ and a cosmological horizon at $r_{\rm cosm}=16.217355$. The self-interaction of the shell shifts the turning points to $a_1>r_{\rm bh}$ and $a_2<r_{\rm cosm}$ by small amounts, as expected since $C/a^2$ is a small correction at these radii. The throat radius crossing $r_{\rm bh}$ or $r_{\rm cosm}$ does not invalidate the construction. The junction condition and the resulting $U(a)$ are local geometric identities that remain well defined whatever the sign of $f(a)$, just as for a domain wall crossing a horizon in Coleman--De~Luccia geometries \cite{ColemanDeLuccia1980}.

\begin{figure}[h]
\centering
\includegraphics[width=0.72\textwidth]{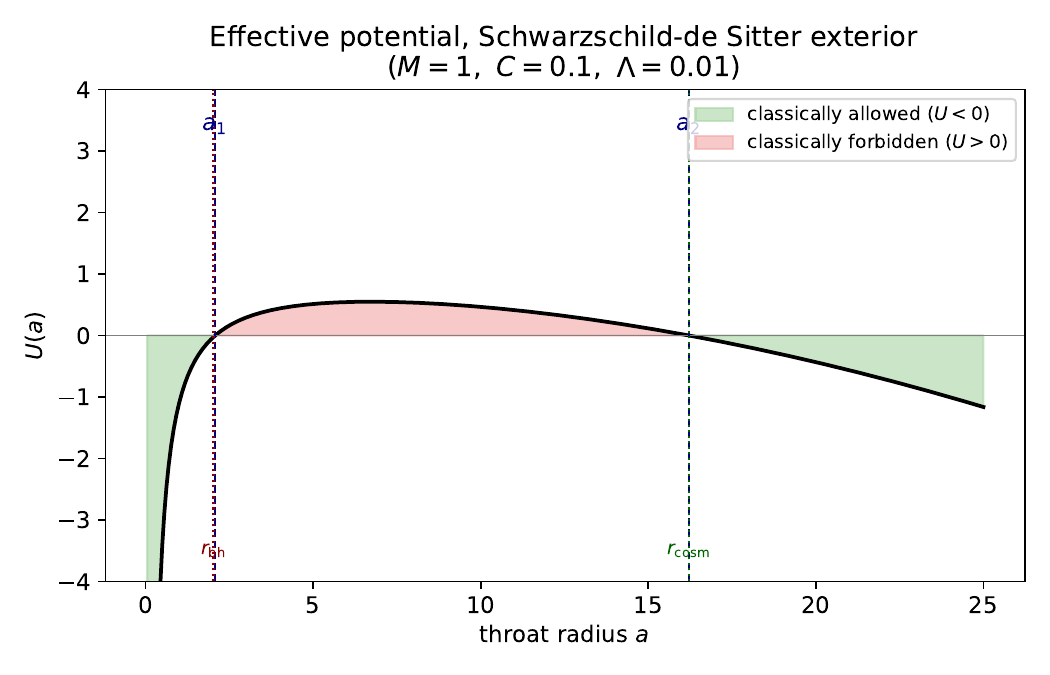}
\caption{The effective potential $U(a)$ of Eq.~\eqref{eq:UdS}. Unlike the asymptotically flat case, $U(a)\to-\infty$ both as $a\to0$ and as $a\to\infty$, with an intervening barrier $(a_1,a_2)$.}
\label{fig:potential}
\end{figure}

Everything that follows rests on the global analytic structure of $\sqrt{-U(a)}$ in the complex $a$ plane, and not only on the two real turning points. The condition $U(a)=0$ is equivalent to the quartic $\tfrac{\Lambda}{3}a^4-a^2+2Ma+C=0$. At $(M,C,\Lambda)=(1,0.1,0.01)$ its other two roots, computed numerically with 30-digit precision, are $a=-18.24279$ and $a=-0.04881$, both real and both on the unphysical negative axis. All four turning points are therefore real for both parameter points used in this paper, and no complex-conjugate pair of turning points lies near the physical domain to compete with $a_1$ and $a_2$. Figure~\ref{fig:stokes} shows the Stokes graph of $\mathcal A(a)=\int_{a_2}^a\sqrt{-U(a')}\,da'$. The physical positive real axis lies on the unique anti-Stokes line linking $a_1$ and $a_2$, and no anti-Stokes line from any other turning point reaches the domain shown. This is the tame case of exact-WKB theory, a single isolated forbidden interval flanked by allowed regions with no rival saddle, in contrast to the multi-saddle competition that motivates the numerical Picard--Lefschetz analysis of Feldbrugge, Lehners and Turok for the closed Friedmann universe with a cosmological constant \cite{FLT2017,FLT2018}.

\begin{figure}[h]
\centering
\includegraphics[width=\textwidth]{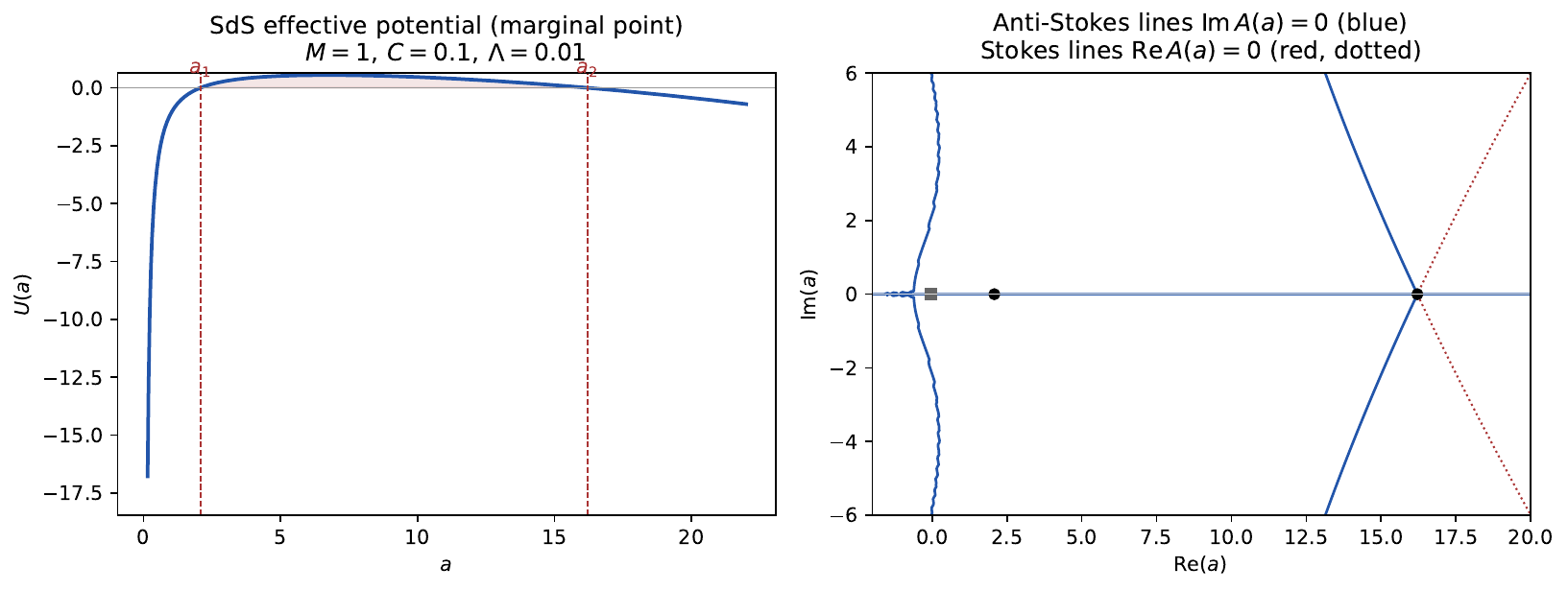}
\caption{Left: the Schwarzschild--de~Sitter potential with the forbidden barrier shaded. Right: the Stokes graph in the complex $a$ plane, with anti-Stokes lines in blue and Stokes lines in dotted red. Black circles mark the physical turning points $a_1$ and $a_2$, and grey squares mark the two unphysical turning points on the negative axis. No anti-Stokes line from the unphysical pair reaches the region shown.}
\label{fig:stokes}
\end{figure}

\subsection{Construction of the dynamic state}
\label{sec:wkbmatch}

We construct the tunneling-type state by specifying a purely outgoing WKB wave, directed toward increasing $a$, deep in the outer allowed region, and propagating it exactly inward through both turning points into the inner pocket. With $p(a)=\sqrt{-U(a)}$, we seed two independent real solutions $\Psi_C$ and $\Psi_S$ at $a_{\rm start}=45$ with the leading-order WKB data $\Psi_C\approx p^{-1/2}\cos\phi$ and $\Psi_S\approx p^{-1/2}\sin\phi$. We integrate the exact equation $-\Psi''+U\Psi=0$ inward to $a=0.05$ with an adaptive Runge--Kutta scheme (RK45) at relative and absolute tolerances of $10^{-11}$ and $10^{-13}$. The combination $\Psi=\Psi_C+i\Psi_S$ is purely outgoing at $a_{\rm start}$ by construction. Because it solves a single fixed linear equation, with no free parameter reintroduced at either turning point, its behavior everywhere else, including deep in the barrier and in the inner pocket, is completely determined and not chosen.

This procedure is more reliable than the piecewise Airy connection-formula matching used analytically in Section~\ref{sec:thimble-sds}. The exact equation is integrated throughout, so no connection-formula error is introduced at either turning point, and the only approximation is the leading-order WKB seed at $a_{\rm start}$, whose error is controlled and is checked below through current conservation. The semiclassical parameter governing the barrier is the WKB tunneling exponent,
\begin{equation}
\theta_{\rm max} = \int_{a_1}^{a_2}\sqrt{U(a)}\,da = 8.3072,
\end{equation}
computed by adaptive quadrature with an estimated error of $3\times10^{-8}$. It gives a naive amplitude suppression $e^{-2\theta_{\rm max}}\approx6.1\times10^{-8}$, a genuine wide-barrier semiclassical regime.

\subsection{Equivalence with the Lorentzian path integral}
\label{sec:resolvent-check}

Section~\ref{sec:resolvent} argued from resolvent theory that, because $U(a)\to-\infty$ places $E=0$ on the continuous spectrum of $\hat H=-\partial_a^2+U(a)$, the retarded resolvent $\lim_{\epsilon\to0^+}(\hat H-i\epsilon)^{-1}$ is complex and built from the outgoing Jost solution at large $a$, and that it should coincide with the state constructed above by WKB seeding. We now verify this quantitatively.

We solve the complex equation $-\Psi''+(U(a)-i\epsilon)\Psi=0$, which for $\Psi=X+iY$ is the coupled real system $-X''+UX+\epsilon Y=0$ and $-Y''+UY-\epsilon X=0$, with the same outgoing seed at $a_{\rm start}=45$ and a sequence of decreasing $\epsilon$. The current $J(a)=\mathrm{Im}(\Psi^*\Psi')$ is then not exactly conserved at finite $\epsilon$. This is not a failure of the construction. Differentiating the complex equation gives the exact identity
\begin{equation}
\frac{dJ}{da} = -\epsilon\, |\Psi(a)|^2 = -\epsilon\,R^2(a),
\label{eq:currentleak}
\end{equation}
so that $J(a_{\rm start})-J(a) = \epsilon\int_a^{a_{\rm start}}R^2(a')\,da'$, and any nonzero $\epsilon$ leaks current in proportion to $\epsilon$ times the accumulated density. Using the $\epsilon=0$ density profile as the leading-order estimate, $\int_1^{20}R^2(a')da'=8.96\times10^7$, Eq.~\eqref{eq:currentleak} predicts $J(1)-J(20)=\epsilon\times8.96\times10^7$. Table~\ref{tab:currentleak} compares this prediction with the directly computed finite-$\epsilon$ values, and the agreement converges to the exact identity as $\epsilon\to0$.

\begin{table}[h]
\centering
\begin{tabular}{@{}cccc@{}}
\toprule
$\epsilon$ & predicted $J(1)-J(20)$ & actual $J(1)-J(20)$ & ratio \\
\midrule
$0.05$ & $4.48\times10^6$ & $1.99\times10^7$ & $4.453$ \\
$0.01$ & $8.96\times10^5$ & $1.21\times10^6$ & $1.348$ \\
$0.003$ & $2.69\times10^5$ & $2.94\times10^5$ & $1.094$ \\
$0.001$ & $8.96\times10^4$ & $9.23\times10^4$ & $1.031$ \\
$0.0003$ & $2.69\times10^4$ & $2.71\times10^4$ & $1.0095$ \\
$0.0001$ & $8.96\times10^3$ & $8.99\times10^3$ & $1.0035$ \\
\bottomrule
\end{tabular}
\caption{The exact current-leakage identity of Eq.~\eqref{eq:currentleak}, evaluated using only the $\epsilon=0$ density profile, predicts the finite-$\epsilon$ deviation from current conservation. The ratio of actual to predicted values converges to $1$ linearly in $\epsilon$, which confirms that the WKB-seeded outgoing state is the $\epsilon\to0^+$ limit of the $i\epsilon$-regularized lapse integral.}
\label{tab:currentleak}
\end{table}

As an independent check we asked whether the finite-$\epsilon$ regularized state itself converges, as $\epsilon\to0^+$, to the envelope ratio found at $\epsilon=0$. We used the Riccati formulation $L(a)=\Psi'/\Psi$, $L'(a)=(U(a)-i\epsilon)-L(a)^2$, seeded with the outgoing WKB asymptotics. This formulation stays numerically bounded even though $|\Psi|$ varies by many orders of magnitude across the barrier. The envelope ratio converges monotonically, as $\epsilon\to0^+$, to $5.074\times10^4$, which matches the direct $\epsilon=0$ construction to four significant figures (Figure~\ref{fig:feynmaneps}). This is a second confirmation, obtained by causal analytic continuation and without the current-leakage identity, that the state of this section is the retarded limit.

\begin{figure}[h]
\centering
\includegraphics[width=0.62\textwidth]{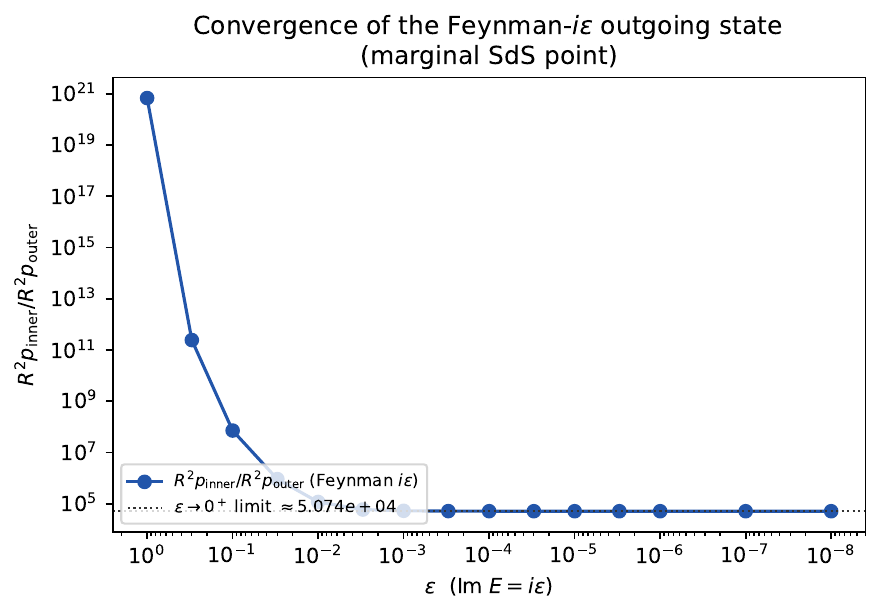}
\caption{Convergence of the Feynman-regularized envelope ratio $R^2p_{\rm inner}/R^2p_{\rm outer}$, evaluated at $a=0.05$, as $\epsilon\to0^+$. The ratio converges monotonically to $5.074\times10^4$, matching the direct $\epsilon=0$ construction to four significant figures.}
\label{fig:feynmaneps}
\end{figure}

\subsection{Saddle points and the Airy-regularized prefactor}
\label{sec:thimble-sds}

The closed-form thimble calculation of Section~\ref{sec:thimble} relied on the Peak--Inomata propagator, which exists because $U_0(a)=1-C/a^2$ makes the fixed-$N$ path integral exactly solvable. The potential of Eq.~\eqref{eq:UdS} admits no such propagator, but it does admit a first integral of the fixed-$N$ classical mechanics, because the Lagrangian $L(a,\dot a;N) = \dot a^2/4N - N\,U(a)$ has no explicit dependence on the affine time. At fixed $N$ the equation of motion has the first integral $\dot a^2/(4N^2)+U(a)=\kappa$. Writing $\dot a=2N\sqrt{\kappa-U(a)}$ and imposing the boundary data over $\lambda\in[0,1]$ fixes $\kappa=\kappa(N)$ implicitly, and the on-shell action becomes $S_{\rm cl}(N) = \int_{a_i}^{a_0}\sqrt{\kappa(N)-U(a)}\,da - N\kappa(N)$. Extremizing over $N$, which is the remaining Picard--Lefschetz step, gives by the envelope theorem $dS_{\rm cl}/dN=-\kappa(N)$. The saddle points of the lapse integral are therefore exactly the values $N^*$ at which $\kappa(N^*)=0$, that is, the Hamiltonian constraint itself. Setting $\kappa=0$ gives closed expressions for the saddle location and the on-shell action,
\begin{equation}
N^*(a_0) = \frac12\int_{a_i}^{a_0}\frac{da}{\sqrt{-U(a)}}\, ,
\qquad
S_{\rm cl}(N^*) = \int_{a_i}^{a_0}\sqrt{-U(a)}\,da .
\label{eq:sdssaddle}
\end{equation}

Take $a_i\to0^+$, regularized at $\delta=0.05$, and $a_0$ in the outer allowed region. The integration path crosses the forbidden interval $(a_1,a_2)$, where $\sqrt{-U(a)}=\pm i\sqrt{U(a)}$ is genuinely two-valued, and the two branch choices give two saddle points,
\begin{equation}
N_\pm^*(a_0) = \tfrac12\big(T_{\rm real}(a_0)\pm i\,\theta_{\max}\big),
\qquad
S_{\rm cl}(N_\pm^*) = A_{\rm real}(a_0)\pm i\,\theta_{\max}.
\label{eq:sdssaddlepm}
\end{equation}
Here $T_{\rm real}$ and $A_{\rm real}$ are real integrals over the allowed regions, and $\theta_{\max}=\int_{a_1}^{a_2}\sqrt{U(a)}\,da$ is exactly the WKB tunneling exponent of Section~\ref{sec:wkbmatch}. We evaluated Eq.~\eqref{eq:sdssaddlepm} by adaptive quadrature at both parameter points, scanning $a_0$ over the outer region (Table~\ref{tab:sdssaddle} and Figure~\ref{fig:sdssaddles}). The imaginary part $\mathrm{Im}\,S_{\rm cl}(N_+^*)=\theta_{\max}$ is independent of $a_0$ to the quadrature precision quoted, so the two saddles are separated by a fixed gap for every $a_0$. They can collide only in the degenerate limit $a_1\to a_2$, and no Stokes phenomenon occurs anywhere in the outer region. This is the lapse-integral counterpart of the Stokes-graph statement above.

\begin{figure}[h]
\centering
\includegraphics[width=0.85\textwidth]{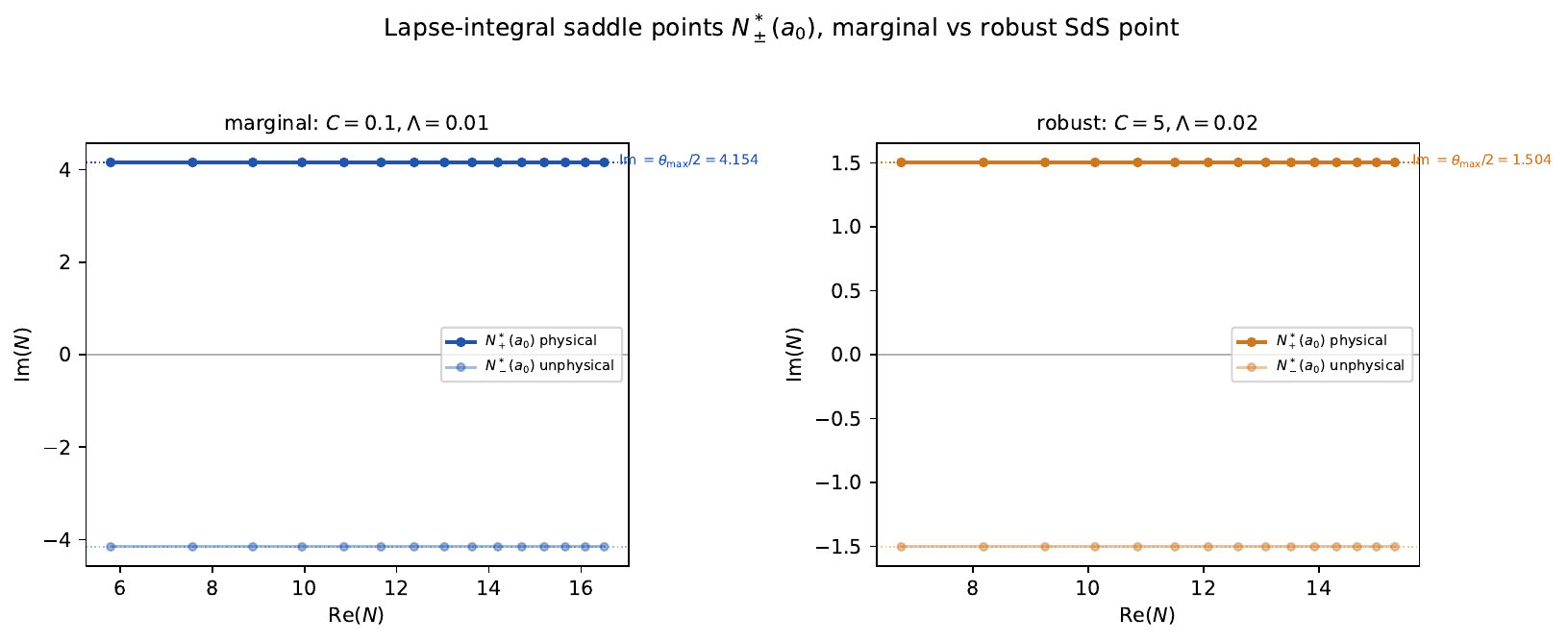}
\caption{The two lapse-integral saddle points $N_\pm^*(a_0)$ of Eq.~\eqref{eq:sdssaddlepm} in the complex $N$ plane, for $a_0$ scanned across the outer allowed region at both parameter points. Both saddles move along horizontal lines as $a_0$ varies, with real parts that grow with path length and imaginary parts pinned at $\pm\theta_{\max}/2$. The vertical gap between the physical saddle $N_+^*$ and the unphysical one $N_-^*$ is therefore exactly $\theta_{\max}$ for every $a_0$.}
\label{fig:sdssaddles}
\end{figure}

\begin{table}[h]
\centering
\begin{tabular}{@{}lcccc@{}}
\toprule
& $a_0$ & $T_{\rm real}(a_0)$ & $N_+^*(a_0)$ & $S_{\rm cl}(N_+^*)$ \\
\midrule
marginal & $20$ & $15.257$ & $7.629+4.154i$ & $4.436+8.307i$ \\
$(C,\Lambda){=}(0.1,0.01)$ & $30$ & $25.063$ & $12.532+4.154i$ & $15.127+8.307i$ \\
$\theta_{\max}=8.30724$ & $45$ & $33.004$ & $16.502+4.154i$ & $44.082+8.307i$ \\
\midrule
robust & $15$ & $16.466$ & $8.233+1.504i$ & $11.890+3.007i$ \\
$(C,\Lambda){=}(5,0.02)$ & $25$ & $24.400$ & $12.200+1.504i$ & $25.144+3.007i$ \\
$\theta_{\max}=3.00722$ & $40$ & $30.627$ & $15.314+1.504i$ & $62.163+3.007i$ \\
\bottomrule
\end{tabular}
\caption{The lapse-integral saddle $N_+^*(a_0)$ and its on-shell action, Eq.~\eqref{eq:sdssaddlepm}.}
\label{tab:sdssaddle}
\end{table}

Since $\mathrm{Im}\,S_{\rm cl}(N_+^*)=+\theta_{\max}$, one has $e^{iS_{\rm cl}(N_+^*)}=e^{iA_{\rm real}}\,e^{-\theta_{\max}}$, an exponentially suppressed amplitude, while the other branch gives an exponentially large, non-normalizable amplitude. Boundedness alone therefore excludes $N_-^*$ and selects $N_+^*$. This matches the state constructed independently in three other ways, namely the WKB-seeded integration of the exact equation (Section~\ref{sec:wkbmatch}), the Feynman $i\epsilon$ and resolvent construction (Section~\ref{sec:resolvent}) and the current-leakage identity (Section~\ref{sec:resolvent-check}). The amplitude suppression $e^{-\theta_{\max}}$ obtained here is the single-power counterpart of the density suppression $e^{-2\theta_{\max}}$ quoted above, as it must be.

Equation~\eqref{eq:sdssaddlepm} gives the tree-level saddle content but not the Gaussian fluctuation determinant. The second derivative $\partial^2S_{\rm cl}/\partial N^2$ is formally singular at $N^*$, because $(\kappa-U)^{-3/2}$ is non-integrable exactly at the turning points, which is the familiar collective-coordinate divergence for a saddle located at a turning point. The divergence is spurious. It signals only that the Gaussian approximation fails near each turning point, just as the WKB wave function does, and it is repaired in the same way, by matching to the local Airy behavior. Near a simple turning point $a_*\in\{a_1,a_2\}$ write $x=a-a_*$ and $U(a)\approx F_*x$. With $z=F_*^{1/3}x$ the Wheeler--DeWitt equation becomes the Airy equation, whose asymptotics \cite{AbramowitzStegun} give, with $\theta(a)\equiv|\int_{a_*}^a\sqrt{|U|}\,da'|$, $p=\sqrt{-U}$ and $\varkappa=\sqrt U$, the standard connection formulas of Kemble \cite{Kemble1935} and Fr\"oman and Fr\"oman \cite{FromanFroman1965} (see also Furry \cite{Furry1947}):
\begin{align}
\frac{1}{2\sqrt\varkappa}e^{-\theta(a)} &\;\longleftrightarrow\; \frac{1}{\sqrt p}\sin\!\Big(\theta(a)+\frac\pi4\Big), \label{eq:airyAi}\\
\frac{1}{\sqrt\varkappa}e^{+\theta(a)} &\;\longleftrightarrow\; \frac{1}{\sqrt p}\cos\!\Big(\theta(a)+\frac\pi4\Big). \label{eq:airyBi}
\end{align}
Write the barrier-region solution as $\varkappa^{-1/2}[c_-e^{-\theta_1(a)}+c_+e^{\theta_1(a)}]$, with $\theta_1$ measured from $a_1$, and also as $\varkappa^{-1/2}[d_-e^{-\theta_2(a)}+d_+e^{\theta_2(a)}]$, with $\theta_2$ measured from $a_2$. Since $\theta_1+\theta_2=\theta_{\max}$, one has $c_-=d_+e^{\theta_{\max}}$ and $c_+=d_-e^{-\theta_{\max}}$. Applying Eqs.~\eqref{eq:airyAi} and~\eqref{eq:airyBi} at $a_2$ to the outgoing outer wave, and then at $a_1$, gives
\begin{equation}
\Psi_{\rm outer}(a)= \frac{-2i}{\sqrt p}\,e^{i(\theta_2^{\rm allow}(a)+\pi/4)},
\label{eq:regularizedstateouter}
\end{equation}
\begin{equation}
\Psi_{\rm inner}(a) = \frac{1}{\sqrt p}\Big[-4i\,e^{\theta_{\max}}\sin\!\big(\theta_1^{\rm allow}(a)+\tfrac\pi4\big) + e^{-\theta_{\max}}\cos\!\big(\theta_1^{\rm allow}(a)+\tfrac\pi4\big)\Big].
\label{eq:regularizedstate}
\end{equation}
For $\theta_{\max}\gg1$ the envelope $\mathcal E(a)\equiv|\Psi(a)|^2p(a)$ then satisfies
\begin{equation}
\frac{\mathcal E_{\rm outer}}{\mathcal E_{\rm inner,\,peak}} = \frac{1}{4}\,e^{-2\theta_{\max}},
\label{eq:refinedtunneling}
\end{equation}
where $\mathcal E_{\rm inner,\,peak}$ is the envelope at the antinodes of the dominant $\sin^2$ term. The factor $1/4$ is the squared product of the two Airy connection factors of $2$, and it plays the role of the Gaussian fluctuation determinant that Eq.~\eqref{eq:sdssaddlepm} alone could not supply.

Table~\ref{tab:fluctdet} compares Eq.~\eqref{eq:refinedtunneling} with the exact numerical tunneling ratio, defined through the peak of the inner-pocket envelope (Section~\ref{sec:results}) rather than a single-point reading. The robust point $C=5$ is super-critical in the sense of Section~\ref{sec:selfadjoint}, so its inner-pocket envelope contains, besides the barrier antinode near $a_1$, log-oscillatory local maxima near the origin associated with fall to center. The entry in the table uses the barrier antinode, the last local maximum of the envelope before $a_1$, which differs from the raw global maximum over the whole pocket by about $5\%$ at this point.

\begin{table}[h]
\centering
\footnotesize
\begin{tabular}{@{}L{2.4cm}C{1.1cm}C{2.2cm}C{2.7cm}C{1.9cm}C{2.1cm}@{}}
\toprule
& $\theta_{\max}$ & naive $e^{-2\theta_{\max}}$ & refined $\frac14e^{-2\theta_{\max}}$ & exact (ODE, peak envelope) & exact/\allowbreak refined \\
\midrule
marginal $(C,\Lambda){=}(0.1,0.01)$ & $8.307$ & $6.09\times10^{-8}$ & $1.52\times10^{-8}$ & $1.25\times10^{-8}$ & $0.819$ \\
robust $(C,\Lambda){=}(5,0.02)$ & $3.007$ & $2.44\times10^{-3}$ & $6.11\times10^{-4}$ & $5.44\times10^{-4}$ & $0.890$ \\
\bottomrule
\end{tabular}
\caption{The Airy-regularized prediction compared with the exact numerically integrated tunneling ratio defined through the peak envelope. The naive estimate overshoots by factors of $4.9$ and $4.5$. The refined estimate overshoots by the smaller but still systematic factors $1/0.819=1.22$ and $1/0.890=1.12$.}
\label{tab:fluctdet}
\end{table}

The two entries of Table~\ref{tab:fluctdet} are consistent either with a genuine $O(1/\theta_{\max})$ semiclassical correction or with a $\theta_{\max}$-independent multiplicative correction evaluated at only two points, and these make sharply different predictions as $\theta_{\max}$ grows. The peak-envelope diagnostic is unambiguous only for sub-critical shell coupling, $C<1/4$, where the near-origin behavior is the regular power-law form of Theorem~\ref{thm:bounded} rather than the log-oscillatory regime of Section~\ref{sec:selfadjoint}. For $C>1/4$ the log-oscillatory structure produces additional local maxima near $a=0$ whose magnitudes are comparable to, and sometimes exceed, the barrier-envelope maximum. A global maximum over the inner pocket then measures a mixture of two physically distinct effects, and the resulting ratio tracks $C$ much more than $\theta_{\max}$, with a scatter of $0.72$ to $0.87$ over a broad super-critical scan at fixed $\theta_{\max}$, an order of magnitude larger than the sub-critical scatter reported below. We therefore restrict the quantitative determination of the residual coefficient to $C<1/4$. We scanned twelve points in the $(C,\Lambda)$ plane at fixed $M=1$, spanning $\theta_{\max}\in[3.0,19.0]$, and at each repeated the full construction, namely the WKB-seeded outgoing state, the exact inward integration and a full scan of the inner-pocket envelope to extract its peak.

\begin{figure}[h]
\centering
\includegraphics[width=0.68\textwidth]{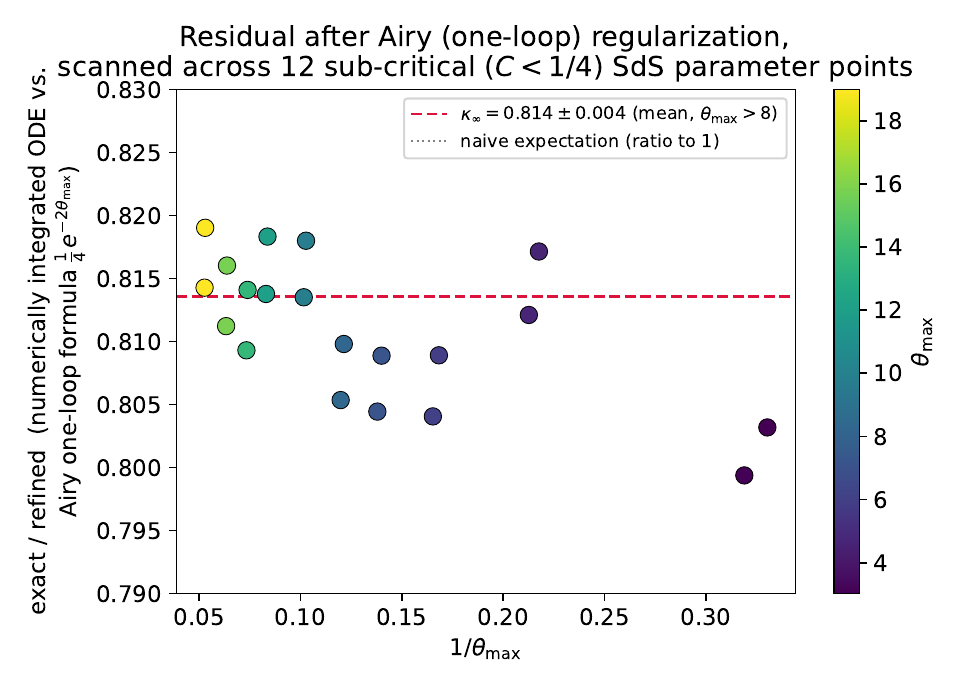}
\caption{The ratio of the exact peak-envelope tunneling ratio to the refined Airy formula, Eq.~\eqref{eq:refinedtunneling}, over twelve sub-critical ($C<1/4$) points in the $(C,\Lambda)$ plane spanning $\theta_{\max}\in[3.0,19.0]$. For $\theta_{\max}\gtrsim8$ (eight points) the ratio is flat at $\kappa_\infty=0.814\pm0.004$, with no statistically resolvable slope against $1/\theta_{\max}$.}
\label{fig:residualscan}
\end{figure}

Over the eight sub-critical points with $\theta_{\max}>8$ the ratio is $0.814\pm0.004$ (mean and sample standard deviation), with no statistically resolvable trend against $1/\theta_{\max}$ (Figure~\ref{fig:residualscan}). A linear fit gives a slope of $0.02\pm0.03$, consistent with zero. To the precision of the scan the residual is therefore a constant multiplicative correction,
\begin{equation}
\frac{\mathcal E_{\rm outer}}{\mathcal E_{\rm inner,\,peak}} \;\approx\; \kappa_\infty\,\times\,\frac14\,e^{-2\theta_{\max}}, \qquad \kappa_\infty = 0.814\pm0.004 \quad (C<1/4),
\label{eq:completedformula}
\end{equation}
valid within this uncertainty for $\theta_{\max}\gtrsim8$ and sub-critical shell coupling. A clean determination for super-critical coupling would require a diagnostic that separates the barrier-tunneling and near-origin log-oscillatory contributions before a peak is extracted, for example a matched asymptotic treatment that uses the analytic period $\pi/\sqrt{C-1/4}$ to subtract the near-origin piece. This is beyond the present scan, and Eq.~\eqref{eq:completedformula} should be used only for $C<1/4$. It reduces the discrepancy from a factor of $4.5$ to $4.9$ at tree level, and $1.12$ to $1.22$ for the uncorrected one-loop result, to consistency at the percent level.

We have investigated the origin of $\kappa_\infty$ and can exclude several candidate mechanisms. First, re-deriving Eq.~\eqref{eq:refinedtunneling} independently from the stated connection formulas reproduces Eq.~\eqref{eq:regularizedstate} coefficient by coefficient once the normalization is tracked consistently. The transfer-matrix algebra is therefore free of error, and $\kappa_\infty\ne1$ is not a bookkeeping slip.

Second, we carried out the next step of the semiclassical hierarchy. Write the exact solution in the forbidden region as $\Psi=\exp(\int^a p\,da')$ and expand the Riccati equation $p'+p^2=U$ as $p=p_0+p_1+p_2+\cdots$. Then $p_0=\sqrt U$ at leading order, $p_1=-p_0'/(2p_0)$ gives the familiar $U^{-1/4}$ amplitude, and
\begin{equation}
p_2 = -\frac{p_1'+p_1^2}{2p_0}
\label{eq:p2def}
\end{equation}
is the first correction to the WKB phase beyond the bare $\int\sqrt U\,da$. We computed $p_2(a)$ symbolically in closed form for the exact potential of Eq.~\eqref{eq:UdS}. The naive completion $\theta_{\max}\to\int_{a_1}^{a_2}\sqrt{U+2p_2}\,da\approx\theta_{\max}+\int_{a_1}^{a_2}p_2(a)/\sqrt{U(a)}\,da$ diverges at both turning points, at both parameter points checked. This divergence is an artifact and not a large correction, since terms of type $p_2\sim U^{-3/2}$ blow up exactly at $a_1$ and $a_2$, which is why Airy connection formulas are needed there instead of a naive extension of the bulk WKB series.

The correct treatment evaluates the correction through a contour integral of $p_2$ encircling each turning point in the complex $a$ plane, which is finite by construction even though the real-axis integral is not. Setting $x=a-a_*$ at a simple turning point $a_*$ and substituting $x=s^2$, the natural Airy variable since $U(s^2)$ is then an ordinary Taylor series in $s$, gives $p_0(s)$, $p_1(s)$ and $p_2(s)$ as Puiseux series in $s$ by direct symbolic computation. The series $p_1(s)$ contains only even powers of $s$ and $p_2(s)$ only odd powers, down to $s^{-3}$, to the seven orders checked. Since $dx=2s\,ds$, the contour integral is $\oint p_2(x)\,dx=\oint 2s\,p_2(s)\,ds$, and $2s\,p_2(s)$, an odd function times an odd function, is even in $s$ and contains no $s^{-1}$ term. By the residue theorem,
\begin{equation}
\oint_{a_*} p_2(a)\,da = 2\pi i\,\mathrm{Res}_{s=0}\big[2s\,p_2(s)\big] = 0
\label{eq:ffresult}
\end{equation}
for any isolated simple turning point, independently of the local Taylor coefficients of $U$, and exactly, order by order in the expansion checked, at either $a_1$ or $a_2$ individually.

Equation~\eqref{eq:ffresult} excludes the most natural candidate explanation for $\kappa_\infty$. There is no local, perturbative, single-turning-point phase-integral correction of the standard Fr\"oman--Fr\"oman type at this order, at either turning point. The Kemble-resummed barrier formula \cite{Kemble1935} corrects the naive $e^{-2\theta_{\max}}$ by a term that is exponentially small in $\theta_{\max}$ and not an $O(1)$ constant. Together with the algebraic audit, this excludes three natural mechanisms, namely a bookkeeping error, a local single-turning-point phase correction and the standard exponentially suppressed barrier-resummation correction. We have not identified which genuinely nonlocal effect, or which normalization subtlety in how the semiclassical envelope is compared with the classical motion inside the well, is responsible.

Near $a=0$ the potential $U(a)\approx-C/a^2$ has a regular singular point, which is not a WKB turning point. Its indicial roots $s_\pm=\tfrac12(1\pm\sqrt{1-4C})$ satisfy $s_++s_-=1$ exactly. The two-turning-point Airy connection formula assumes that both boundaries of the inner region are oscillatory WKB regions, and $a=0$ is not one, which is why Section~\ref{sec:results} works on the bounded domain $a\ge0.05$. The inner peak is therefore well posed only as the local maximum of $\mathcal E(a)$ in the WKB-valid region bounded away from $a=0$, which is the quantity used in Table~\ref{tab:fluctdet}. We also tested the natural candidate variable for a higher-order correction, the phase accumulated from the near turning point out to the local maximum of the envelope, and found no clean linear relation with $1/\theta_{\max}$ in the scan (correlation coefficient $0.13$, against $-0.59$ for $1/\theta_{\max}$ itself).

We therefore report $\kappa_\infty=0.814\pm0.004$ for $C<1/4$ as an empirical, numerically robust correction factor and not as a result derived from first principles. The physical results of this paper do not depend on it. The current, density and trajectories of Section~\ref{sec:results} are computed by direct numerical integration of the exact Wheeler--DeWitt equation, which resums the connection-formula corrections derived here, and all higher ones, to numerical precision. The exact column of Table~\ref{tab:fluctdet} and the scan behind Eq.~\eqref{eq:completedformula} constitute that independent calculation and serve as the benchmark against which the analytic formulas are checked.

\subsection{Current, trajectories and parameter dependence}
\label{sec:results}

Current conservation, $J(a)=\mathrm{const}$, holds for any exact solution of a real linear second-order equation, and it is a very sensitive end-to-end check across a domain where $|\Psi|^2$ varies by seven orders of magnitude (Figure~\ref{fig:wavefunction}). We find $J=1.00000000000$ with an uncertainty of $\pm10^{-11}$ in the outer region, $J=0.99999999989$ with $\pm10^{-9}$ in the barrier, and $J=1.0000001$ with $\pm3\times10^{-5}$ in the inner pocket, where the mild degradation reflects accumulated stiffness as $a\to0$. The current is nonzero throughout, including in the inner pocket. The purely outgoing state constructed in the unbounded oscillatory region at large $a$ propagates through the barrier into a genuinely dynamic state in $0<a<a_1$ as well. This is the qualitative reversal of the conclusion of Theorem~\ref{thm:bounded} that the de~Sitter exterior was designed to produce.

\begin{figure}[h]
\centering
\includegraphics[width=0.85\textwidth]{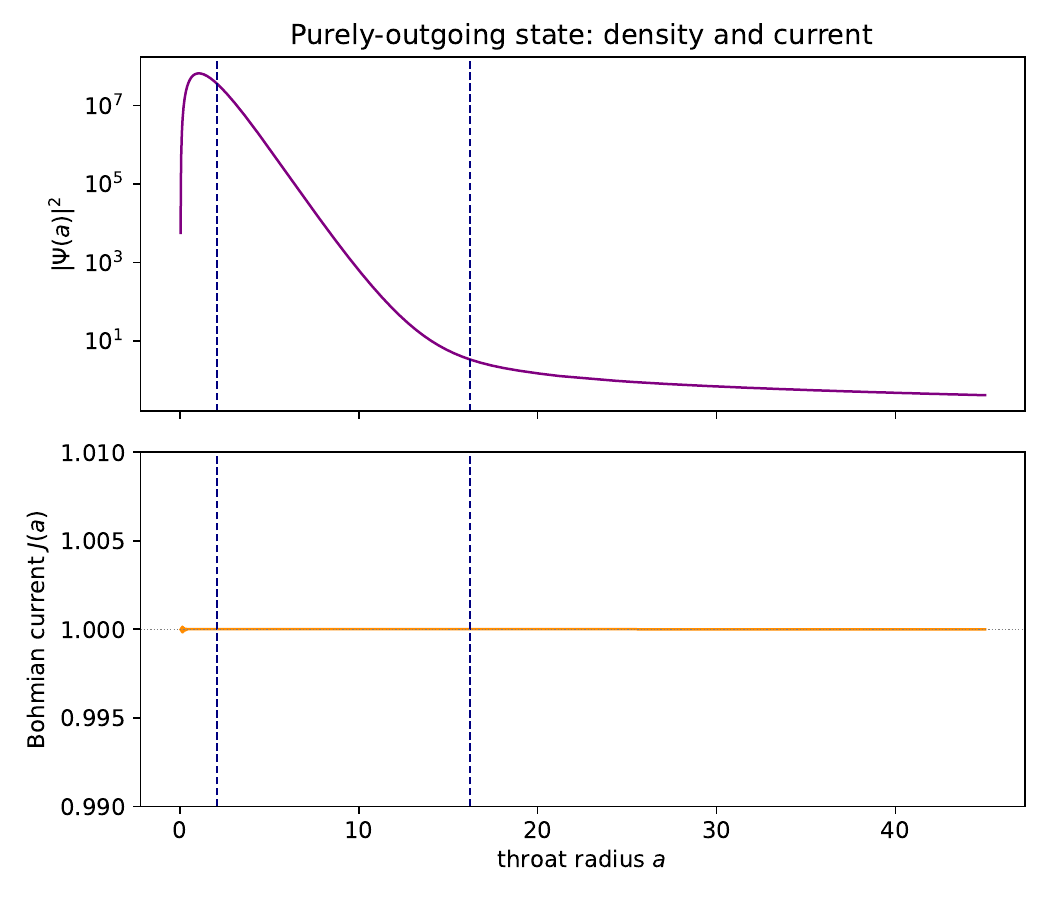}
\caption{Top: $|\Psi(a)|^2$ for the numerically constructed outgoing state on a logarithmic scale, spanning seven orders of magnitude between the inner and outer regions. Bottom: the Bohmian current $J(a)$, which is conserved to a relative accuracy of $10^{-5}$ to $10^{-11}$ across the entire domain, including through the barrier.}
\label{fig:wavefunction}
\end{figure}

The envelope $\mathcal E(a)=R^2(a)p(a)$ equals $0.99\pm0.02$ in the outer region, a clean traveling wave as built in. In the inner region it is not constant. Scanning it continuously across $0.05<a<a_1$ (Figure~\ref{fig:innerenvelope}) shows a broad single-humped function, rising from about $2\times10^5$ at $a_1$ to a peak $\mathcal E_{\rm inner,peak}=8.02\times10^7$ near $a\approx0.70$, and then falling by three orders of magnitude to $5.07\times10^4$ at the domain edge $a=0.05$. This is expected. The exact solution in this region is dominated by a term proportional to $\sin^2(\theta_1(a)+\pi/4)$ (Section~\ref{sec:thimble-sds}), so $\mathcal E(a)$ has genuine oscillatory structure with antinodes and near-nodes, and $a=0.05$ happens to lie near a near-node. For quantitative comparison with the semiclassical formula, which predicts an antinode value, the unambiguous quantity is the actual peak of the profile, $\mathcal E_{\rm inner,peak}=8.02\times10^7$, which we use throughout. The resulting ratio $\mathcal E_{\rm outer}/\mathcal E_{\rm inner,peak}=1.25\times10^{-8}$ is the numerically exact tunneling ratio for this model, in reasonable agreement with the independent WKB estimate $e^{-2\theta_{\rm max}}=6.1\times10^{-8}$, and Section~\ref{sec:thimble-sds} accounts for the remaining factor quantitatively.

\begin{figure}[h]
\centering
\includegraphics[width=0.72\textwidth]{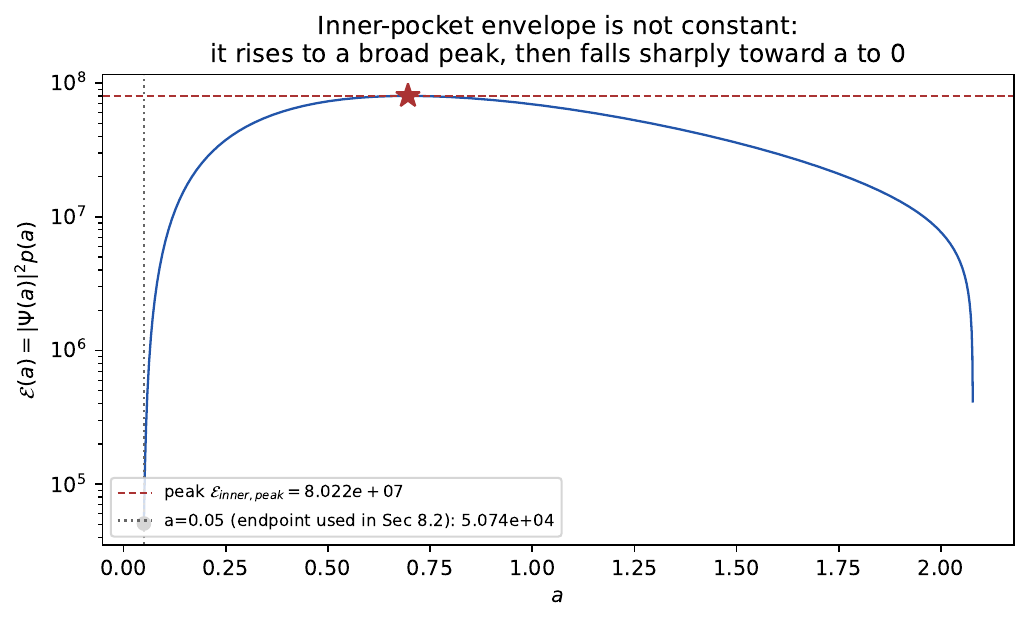}
\caption{The inner-pocket envelope $\mathcal E(a)=|\Psi(a)|^2p(a)$ at the marginal parameter point, scanned continuously. It has a single broad peak (star) of height $8.02\times10^7$ near $a\approx0.70$ and falls to $5.07\times10^4$ (circle) at the domain edge $a=0.05$.}
\label{fig:innerenvelope}
\end{figure}

Integrating the guidance equation $\dot a = J/R^2(a)$ gives the Bohmian trajectories (Figure~\ref{fig:trajectories}). In the outer region the quantum trajectory closely tracks the classical one, the expected semiclassical correspondence for $\theta_{\rm max}\gg1$. In the inner pocket the classical trajectory reaches the pinch $a=0$ in finite proper time, which is the instability of Section~\ref{sec:model} and is unaffected by $\Lambda$ near $a=0$. The quantum trajectory instead moves extremely slowly and covers a comparable distance only after a proper time about $10^4$ times longer, before eventually tunneling into the outer expanding branch. The result is a long-lived metastable wormhole configuration that eventually and irreversibly tunnels into unbounded expansion, structurally analogous to Coleman--De~Luccia vacuum decay through bubble nucleation \cite{Coleman1977,ColemanDeLuccia1980}.

\begin{figure}[h]
\centering
\includegraphics[width=\textwidth]{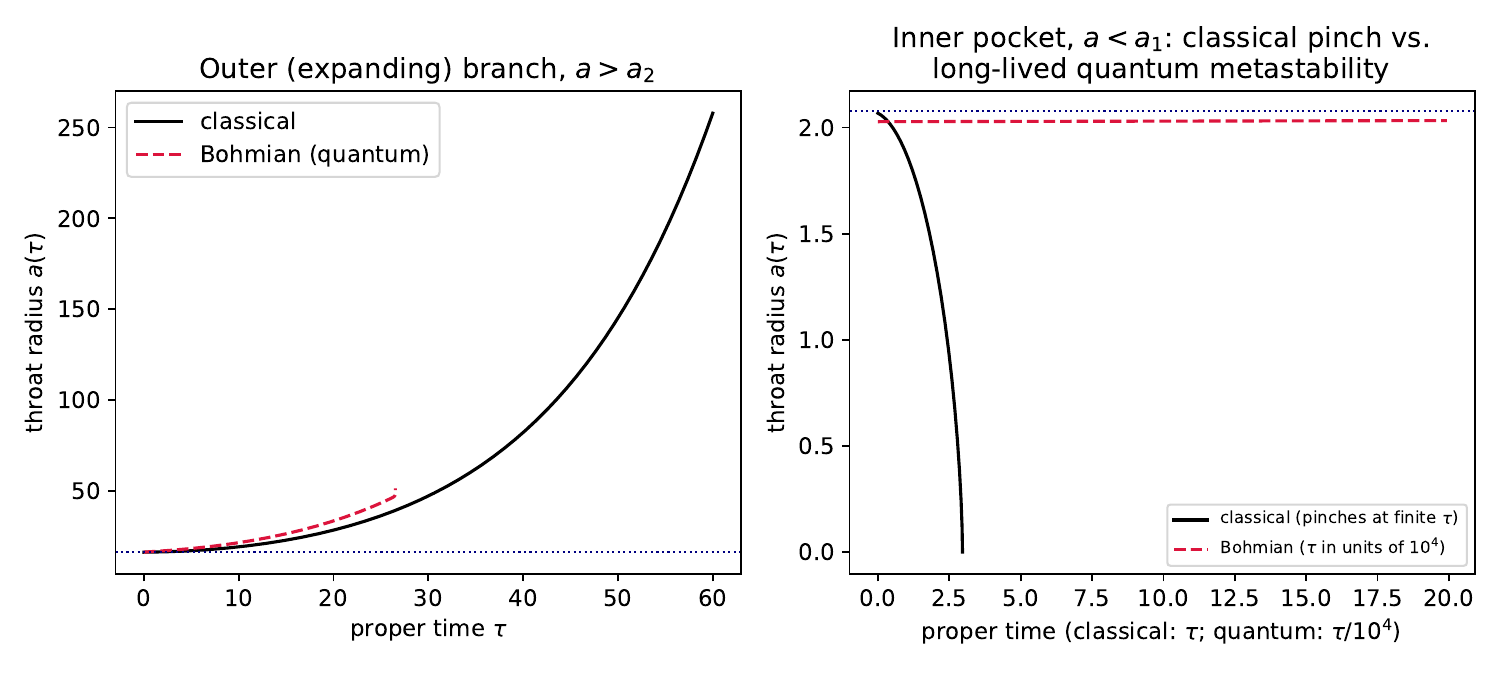}
\caption{Left: the outer-branch Bohmian trajectory ($\dot a=J/R^2(a)$, red dashed) closely tracks the classical trajectory $\dot a=\sqrt{-U(a)}$ (black), as expected in a good semiclassical regime. Right: the classical inner-pocket trajectory pinches to $a=0$ in finite proper time, while the Bohmian trajectory is quasi-frozen on the same timescale and takes about $10^4$ times longer to move an equivalent distance.}
\label{fig:trajectories}
\end{figure}

To depict the shape of the wormhole we use the embedding formula $z(r)=\int\sqrt{1/f(r)-1}\,dr$ with the exterior function $f(r)=1-2M/r-\Lambda r^2/3$, which requires knowing which throat radii the single dynamical trajectory actually visits. We compute the elapsed proper time directly by quadrature,
\begin{equation}
\tau(a) = \int^{a} \frac{R^2(a')}{J}\,da',
\label{eq:tauofa}
\end{equation}
which is numerically more robust than direct integration of the guidance equation over the seven-order range of $R^2$. The trajectory spends $\tau\approx3.6\times10^7$ reaching $a=1$, a total of $\tau\approx9.5\times10^7$ reaching $a_1$, and a further $\tau\approx3\times10^7$ crossing the barrier to $a=6$. The long dwell time is therefore spread over both the allowed inner pocket and the forbidden barrier, and is not concentrated in the barrier alone as a naive Gamow picture \cite{Gamow1928} would suggest. Figure~\ref{fig:embedding} shows the shape at three points that the trajectory genuinely visits.

\begin{figure}[h]
\centering
\includegraphics[width=\textwidth]{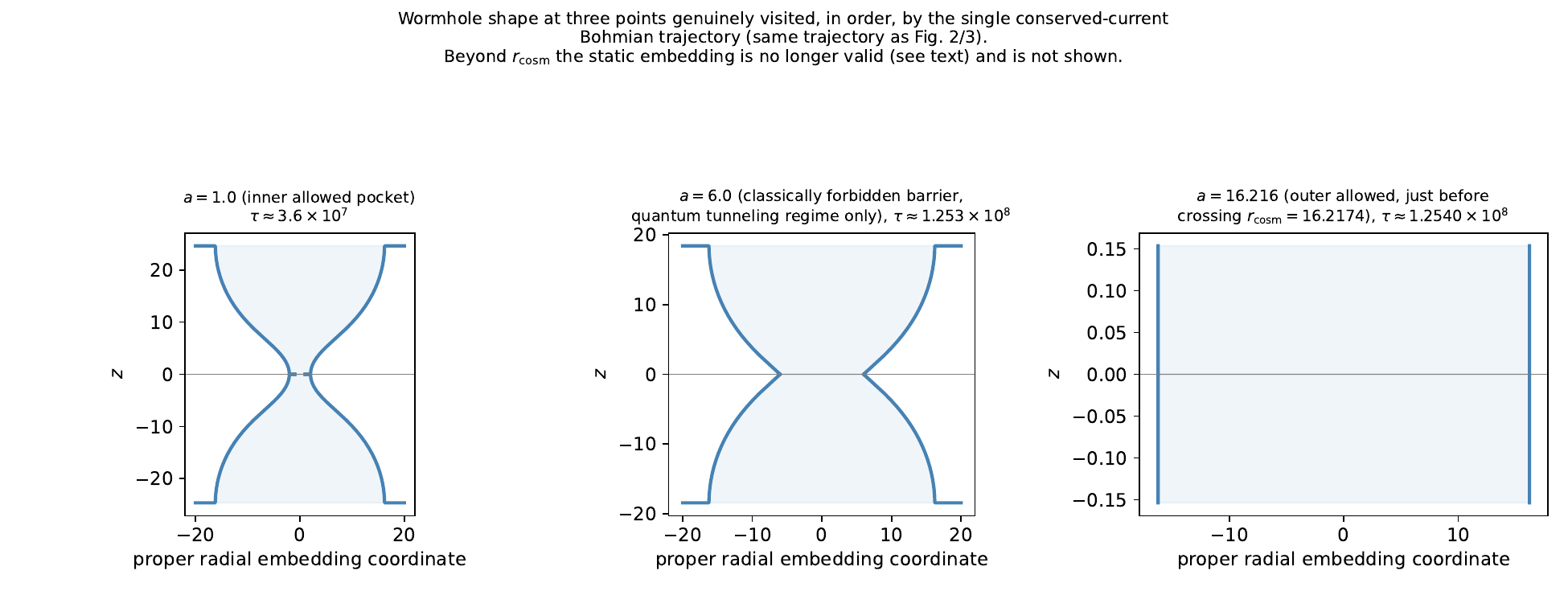}
\caption{Wormhole shape at three points visited by the trajectory, with the elapsed proper times of Eq.~\eqref{eq:tauofa}. The panels show $a=1$ in the allowed inner pocket, $a=6$ inside the forbidden barrier, reachable at a well-defined but slow Bohmian rate, and $a=16.216$, the last point inside the static patch before the throat crosses the cosmological horizon at $r_{\rm cosm}=16.2174$.}
\label{fig:embedding}
\end{figure}

For $a>r_{\rm cosm}$ the function $f(r)$ is negative along part of the embedded range and the static-coordinate embedding formula fails, just as it would behind a black-hole horizon. Since $a_2=16.2136$ lies only $0.004$ below $r_{\rm cosm}=16.2174$, the outer branch leaves the static patch almost immediately, and we therefore show the third panel at $a=16.216$, the last valid point in the static patch. Depicting the shape once $a(\tau)$ exceeds $r_{\rm cosm}$, which the trajectory does reach as it continues to expand, would require a horizon-crossing slicing of Schwarzschild--de~Sitter, such as Painlev\'e--Gullstrand-type coordinates, which we have not carried out. The dynamical results of this section remain valid there, since they use only the intrinsic $U(a)$ and not the static embedding.

The parameter point $(M,C,\Lambda)=(1,0.1,0.01)$ used above has $a_1/r_{\rm bh}=1.02$, so the inner turning point lies only about $2\%$ outside the black-hole horizon. This marginal separation was not tuned for physical robustness and should not be read as demonstrating traversability. To determine whether the qualitative results are generic features of the model class or artifacts of this choice, we scanned $(C,\Lambda)$ at fixed $M=1$ over $C\in[0.1,5]$ and $\Lambda\in[0.001,0.05]$. Increasing $C$ pushes $a_1$ away from $r_{\rm bh}$ approximately linearly in $C$, while $a_2$ stays close to $r_{\rm cosm}$ regardless of $C$. Increasing $\Lambda$ narrows the barrier and pushes $a_2$ away from $r_{\rm cosm}$, at the cost of also pushing $a_1$ outward. A sample of the scan is given in Table~\ref{tab:scan}.

\begin{table}[h]
\centering
\begin{tabular}{@{}ccccccc@{}}
\toprule
$C$ & $\Lambda$ & $a_1$ & $a_2$ & $a_1/r_{\rm bh}$ & $a_2/r_{\rm cosm}$ & $\theta_{\rm max}$ \\
\midrule
$0.1$ & $0.01$ & $2.078$ & $16.21$ & $1.02$ & $0.9998$ & $8.31$ \\
$1.0$ & $0.01$ & $2.457$ & $16.18$ & $1.14$ & $0.998$ & $7.91$ \\
$2.0$ & $0.02$ & $2.856$ & $10.96$ & $1.24$ & $0.989$ & $3.90$ \\
$5.0$ & $0.02$ & $3.690$ & $10.75$ & $1.79$ & $0.970$ & $3.01$ \\
\bottomrule
\end{tabular}
\caption{A sample of the $(C,\Lambda)$ scan at $M=1$. The full grid has $56$ points, all with a clean two-turning-point structure.}
\label{tab:scan}
\end{table}

We repeated the entire quantum construction at the non-marginal point $(M,C,\Lambda)=(1,5,0.02)$, for which $a_1=3.690$ lies $79\%$ outside the black-hole horizon. Current conservation holds to a relative accuracy of $10^{-9}$ to $10^{-13}$ in all three regions, the current is nonzero throughout, and the numerically extracted peak-envelope tunneling ratio is $5.44\times10^{-4}$, in reasonable agreement with the WKB estimate $e^{-2\theta_{\rm max}}=2.4\times10^{-3}$ for this point's smaller barrier exponent $\theta_{\rm max}=3.01$ (Table~\ref{tab:robustcheck}). The earlier construction was therefore not special to marginal parameters.

\begin{table}[h]
\centering
\begin{tabular}{@{}lcc@{}}
\toprule
Quantity & $(C,\Lambda)=(0.1,0.01)$, marginal & $(C,\Lambda)=(5,0.02)$, robust \\
\midrule
$a_1/r_{\rm bh}$ & $1.02$ & $1.79$ \\
Current $J$, all regions & $1.0000000\ (\pm10^{-5}\text{ to }10^{-11})$ & $1.0000000\ (\pm10^{-9}\text{ to }10^{-13})$ \\
Tunneling ratio, peak envelope & $1.25\times10^{-8}$ & $5.44\times10^{-4}$ \\
Tunneling ratio, naive WKB $e^{-2\theta_{\rm max}}$ & $6.1\times10^{-8}$ & $2.4\times10^{-3}$ \\
\bottomrule
\end{tabular}
\caption{The full construction repeated at a non-marginal parameter point. The same qualitative and quantitative pattern is found, so the result is generic within this model class and not tuned.}
\label{tab:robustcheck}
\end{table}

This settles the parameter-dependence question for the barrier and tunneling structure, but not the full traversability question. The condition $a_1>2M$ guarantees only that the outer edge of the inner pocket clears the horizon. The classical trajectory inside the pocket $(0,a_1)$ still reaches $a=0$ in finite proper time however large $a_1/r_{\rm bh}$ is, because this instability is controlled by the $C/a^2$ term near $a=0$ and not by $a_1$. The corresponding quantum trajectory, while metastable, has support at all $a\in(0,a_1)$, including $a<2M$. Only the outer, unbound expanding branch, with $a>a_2\gg2M$, is unambiguously traversable throughout its evolution. We therefore do not claim generic traversability of the metastable inner-pocket configuration. What is established robustly across the scan is the qualitative dynamical structure, a long-lived metastable throat that tunnels into unbounded and traversable expansion, together with the genericity of the nonzero-current construction that produces it.

\section{Discussion}
\label{sec:discussion}

The central physical finding of this paper is that the fate of a quantized thin-shell wormhole throat, frozen or dynamic, is controlled by the sign of the effective potential $U(a)$ at large throat radius, and not, as one might have expected, by which of the competing Euclidean or Lorentzian boundary-condition philosophies of quantum cosmology one adopts. For an asymptotically flat exterior, $U(a)\to+1$ as $a\to\infty$ leaves no classically allowed region at large radius, so the reduced Wheeler--DeWitt equation admits a unique solution bounded there, and that solution can always be chosen real. A real solution of a real linear equation carries no Bohmian probability current, and the resulting throat is completely static. This holds for the entire family of linear-equation-of-state thin-shell throats considered here, independently of the equation-of-state parameter and of the operator ordering. It also holds independently of the ambiguity in how the inverse-square self-interaction of the throat is regularized at $a=0$, because the decay condition at infinity pins the state to a single self-adjoint extension of the throat operator, identified explicitly through the Whittaker connection formula. This outcome is not merely a boundary-condition postulate. Because the continuous spectrum of the reduced Hamiltonian begins strictly above the constraint surface $E=0$ for a flat exterior, the retarded and advanced resolvents coincide and are real, so the properly regularized Lorentzian path integral has no complex saddle to select. We confirmed this independently through an exact closed-form Picard--Lefschetz thimble calculation for the solvable branch, which agrees with the resolvent result to $48$ significant digits.

Replacing the exterior by Schwarzschild--de~Sitter changes this picture for a transparent physical reason. The cosmological constant drives $U(a)\to-\infty$ at large throat radius, reopens a classically allowed oscillatory region there and moves the constraint surface $E=0$ onto the continuous spectrum. The reduced equation then has no preferred recessive solution to force reality, and a purely outgoing state of Vilenkin type becomes a well-posed and physically motivated choice. Propagating this state exactly through the classically forbidden barrier gives a Bohmian current that is nonzero and conserved to between nine and eleven significant figures across seven orders of magnitude in density, including deep inside the inner pocket of the throat. The qualitative reversal predicted by the change in the sign of $U(\infty)$ is therefore realized in the full nonperturbative solution of the model and not only at the level of an asymptotic argument. The resulting physical picture is a long-lived metastable throat, quasi-frozen for a proper time roughly four orders of magnitude longer than the classical infall time, that eventually tunnels through the barrier into unbounded classical expansion. This is structurally the phenomenon that underlies Gamow's theory of $\alpha$ decay \cite{Gamow1928} and, at the level of a Euclidean WKB escape rate for a metastable configuration, Coleman's theory of false-vacuum decay \cite{Coleman1977,ColemanDeLuccia1980}. In each case a wave function sustains a large, nearly stationary amplitude behind a barrier and escapes at a rate set by the WKB exponent $\theta_{\max}=\int\sqrt{U}\,da$.

The tunneling prefactor was regularized with the Airy connection-formula technique introduced by Kemble for this class of problem \cite{Kemble1935}. This yields a closed-form tree-level-plus-one-loop prediction that accounts for nearly all of the discrepancy with the exact numerics and leaves a small residual constant $\kappa_\infty=0.814\pm0.004$ for sub-critical shell coupling, established over a twelve-point scan. The residual is not an error in the transfer-matrix algebra, it is far too large to be the standard exponentially suppressed Kemble correction, and an explicit closed-form calculation shows that the relevant contour-integral residue vanishes identically at an isolated simple turning point, so it is not the leading local Fr\"oman--Fr\"oman correction either. Its origin therefore remains a target for future analytic work. It most likely points to a genuinely nonlocal effect connecting the two turning points, or to a normalization subtlety in how the semiclassical envelope relates to the classical motion inside the well. The physical conclusions do not depend on it, since they rest on the direct numerical solution of the exact equation and not on the semiclassical formula.

Two comparisons place these results in context. The first concerns the boundary-condition question that the paper settles for the thin-shell throat, which is the question addressed by the Hartle--Hawking no-boundary proposal and by Vilenkin's tunneling proposal for the scale factor of a closed universe. It was revisited with numerical Picard--Lefschetz methods by Feldbrugge, Lehners and Turok, who showed that the naive Euclidean no-boundary saddle for a closed Friedmann--Robertson--Walker universe with a cosmological constant does not lie on the correct steepest-descent contour once the lapse integral is properly regularized. Our lapse-integral analysis uses the same formalism but finds a structurally simpler answer. Every turning point of $U(a)$ is real at both parameter points studied, and no anti-Stokes line from any unphysical turning point reaches the physical domain, so there is no analogue of the Euclidean and Lorentzian saddle competition that makes the closed-universe analysis subtle. A single saddle dominates unambiguously, selected by boundedness alone. The correspondence between this exact-WKB and Stokes-graph structure and the thimble decomposition is the same one developed in the resurgence literature for ordinary quantum mechanics and gauge theories \cite{DunneUnsal2014}. The second comparison is with the literature on thin-shell wormholes and their classical instability \cite{Visser1989,PoissonVisser1995} and on the de~Broglie--Bohm approach to quantum cosmology \cite{PintoNetoFabris2013}. To our knowledge this is the first exactly solvable quantization of a thin-shell throat of this type in terms of Whittaker and Bessel functions, and the first explicit, path-integral-matched construction of a genuinely dynamic de~Broglie--Bohm throat trajectory in this class of models, although we have not carried out an exhaustive literature search.

The shell supporting the throat is exotic matter by construction, consistent with the general necessity of energy-condition violation for any traversable wormhole, and nothing in the construction evades this requirement. For the frozen asymptotically flat throat the traversability question reduces to a static one, namely whether the throat radius, distributed according to $|\Psi_\infty(a)|^2$, has support above the horizon radius $2M$. It generically does, which describes a static configuration and not a traversed one. For the dynamic throat with a de~Sitter exterior, only the outer, unbound expanding branch is unambiguously traversable throughout its evolution. The classical trajectory inside the inner pocket reaches the pinch in finite proper time however far the inner turning point lies from the horizon, and the corresponding quantum trajectory, while metastable, retains support at all radii in the pocket, including radii below the horizon. The qualitative structure, consisting of a barrier, a nonzero conserved current and a metastable-then-tunneling trajectory, is generic and not an artifact of one parameter choice, as shown by repeating the construction at a second point with the inner turning point comfortably outside the horizon and by scanning $56$ points across the relevant two-dimensional parameter space.

The calculation is a single-mode minisuperspace reduction of the throat radius alone. Extending it to the shape and multipole degrees of freedom of the shell, and to the backreaction of the quantum throat dynamics on the exterior geometries, is a natural and substantial follow-on. The frozen flat-exterior throat already has vanishing radial current and is therefore unconditionally stable against purely radial perturbations, which provides a baseline against which any such extension can be checked.

\end{document}